\documentclass[11pt]{article}
\usepackage[numbers]{natbib}
\usepackage{amssymb, latexsym, mathtools, amsthm}
\pagestyle{plain}
\begingroup
    \makeatletter
    \@for\theoremstyle:=definition,remark,plain\do{%
        \expandafter\g@addto@macro\csname th@\theoremstyle\endcsname{%
            \addtolength\thm@preskip\parskip
            }%
        }
\endgroup
\usepackage{enumerate}
\usepackage{pgf,tikz}
\usepackage{pdfsync}
\usepackage{txfonts}
\usepackage[T1]{fontenc}
\usepackage{booktabs}
 \usepackage{graphicx}
\definecolor{dnrbl}{rgb}{0,0,0.3}
\definecolor{dnrgr}{rgb}{0,0.3,0}
\definecolor{dnrre}{rgb}{0.5,0,0}
\usepackage[colorlinks=true, citecolor=dnrgr, linkcolor=dnrre, urlcolor=dnrre] {hyperref}
\usepackage{xcolor}
\usepackage{parskip}
\usepackage{tabularx, colortbl}

\usepackage{geometry}
 \geometry{
left=29mm,
 right=29mm,
 top=34mm,
 bottom=34mm,  
 footskip=1.2cm
 } 

 \usepackage[scriptsize, up]{caption}
\usepackage{pgf,tikz}
\usetikzlibrary{decorations, decorations.pathmorphing}
\usetikzlibrary {shapes}

\theoremstyle{plain}
\newtheorem{thm}{Theorem}[section]

\newtheorem{lem}[thm]{Lemma}

\newtheorem{defi}[thm]{Definition}
\numberwithin{equation}{subsection}
\makeatletter

\let\c@table\c@figure
\makeatother


\newcommand{\Nat}{\mathbb{N}}

\newcommand{\restr}{\upharpoonright}  
\newcommand{\un}{\uparrow} 
\newcommand{\de}{\downarrow} 

\DeclarePairedDelimiter{\tuple}{\langle}{\rangle}

\newcommand{\bigo}[1]{\mathop{\bf O}\/\left({#1}\right)}


\newcommand{\KS}{Ku{\v{c}}era and Slaman\ }
\newcommand{\CHKW}{Calude, Hertling, Khoussainov and Wang\ }

\newcommand{\ml}{Martin-L\"{o}f }

\newcommand{\ie}{i.e.\ }
\newcommand{\ce}{c.e.\ }

\newcommand{\lce}{left-c.e.\ }

\newcommand{\pf}{prefix-free }

\renewenvironment{abstract}
 { \normalsize
  \list{}{
    \setlength{\leftmargin}{.0cm}%
    \setlength{\rightmargin}{\leftmargin}%
    }%
  \item {\bf \abstractname.} \relax}
 {\endlist}

 \makeatletter
\newtheorem*{rep@theorem}{\rep@title}
\newcommand{\newreptheorem}[2]{%
\newenvironment{rep#1}[1]{%
 \def\rep@title{#2 \ref{##1}}%
 \begin{rep@theorem}}%
 {\end{rep@theorem}}}
\makeatother
 \newreptheorem{thm}{Theorem}
\newreptheorem{lem}{Lemma}
\usepackage{titlesec}
 \titleformat{\paragraph}
{\normalfont\normalsize\em\bfseries}{\theparagraph}{1em}{$\blacktriangleright$\hspace{0.2cm}}
\titlespacing*{\paragraph}
{0pt}{2.25ex plus 1ex minus .2ex}{0.5ex plus .2ex}

\title{Kobayashi compressibility
\thanks{Barmpalias was supported by the 
1000 Talents Program for Young Scholars from the Chinese Government, grant no.\ D1101130.
Additional support was received by
the Chinese Academy of Sciences (CAS) and the Institute of Software of the CAS. 
Downey was supported by the
Marsden Fund of New Zealand. The authors wish to
thank the anonymous referees for various suggestions and corrections.}}

\author{George Barmpalias  \and Rodney G.~Downey}
\date{\today}
\begin{document}
\maketitle
\begin{abstract}
Kobayashi \cite{Kobayashi_rep}
introduced a uniform notion of compressibility of infinite binary sequences $X$ in terms of
relative Turing computations with sub-identity use of the oracle.
Given $f:\Nat\to\Nat$ we say that $X$ is $f$-compressible if there exists $Y$ such that
for each $n$ we compute $X\restr_n$ using at most the first $f(n)$ bits of the oracle $Y$.
Kobayashi compressibility 
has remained a relatively obscure notion, with the exception of some work on resource bounded
Kolmogorov complexity. The main goal of this note is to show that it
is relevant to a number of topics in current research on algorithmic randomness.

We prove that Kobayashi compressibility can be used in order to define \ml randomness,
a strong version of finite randomness and Kurtz randomness, strictly in terms of Turing reductions.
Moreover these randomness notions naturally correspond  to Turing reducibility, weak truth-table reducibility
and truth-table reducibility respectively.
Finally we discuss Kobayashi's main result from \cite{Kobayashi_rep} regarding the compressibility of
computably enumerable sets, and provide additional related original results.
\end{abstract}
\vspace*{\fill}
\noindent{\bf George Barmpalias}\\[0.5em]
\noindent
State Key Lab of Computer Science, 
Institute of Software, Chinese Academy of Sciences, Beijing, China.
School of Mathematics and Statistics,
Victoria University of Wellington, New Zealand.\\[0.2em] 
\textit{E-mail:} \texttt{\textcolor{dnrgr}{barmpalias@gmail.com}}\\[0.2em]
\textit{Web:} \texttt{\textcolor{dnrre}{http://barmpalias.net}}\par
\addvspace{\medskipamount}\medskip\medskip
\noindent{\bf Rodney G.~Downey}\\[0.5em]  
\noindent School of Mathematics and Statistics,
Victoria University of Wellington, New Zealand.\\[0.2em]
\textit{E-mail:} \texttt{\textcolor{dnrgr}{rod.downey@vuw.ac.nz}}\\[0.2em]
\textit{Web:} \texttt{\textcolor{dnrre}{http://homepages.ecs.vuw.ac.nz/$\sim$downey}} 

\vfill \thispagestyle{empty}
\clearpage

\section{Introduction}
The compressibility of a finite binary program $\sigma$ is defined in terms of the shortest program
that can generate $\sigma$. This is the idea behind the theory of Kolmogorov complexity $C$ of strings.
For example, if $c\in \Nat$ then  $\sigma$ is $c$-incompressible if $C(\sigma)\geq |\sigma|-c$, and similar definitions
are used with respect to the prefix-free complexity $K$, where the underlying universal machine is prefix-free.
This notion of incompressibility has a well-known extension to infinite binary streams $X$, where we say
that $X$ is $c$-incompressible if $K(X\restr_n)\geq n-c$ for all $n$. Then the algorithmic randomness of $X$ is often
identified with the property that $X$ is $c$-incompressible for some $c$, and coincides with the notion of \ml
randomness\footnote{This is the first robust and most accepted definition of algorithmic randomness and was originally introduced by \ml \cite{MR0223179} based on effective statistical tests.}. These concepts are basic in Kolmogorov complexity,
and the reader is referred to the standard textbooks \cite{Li.Vitanyi:93,rodenisbook} for the relevant background.

\subsection{Kobayashi compressibility and incompressibility}\label{fs5htfcyZb}
The reader may observe that the above extension of the definition of compressibility from finite to infinite sequences
is nonuniform, in the sense that the compression of the various initial segments of $X$ could be done by different, possibly unrelated short programs. A uniform extension of compressibility from strings to infinite streams would require
the individual short programs to be part of a single stream. Kobayashi \cite{Kobayashi_rep} considered exactly that approach.
\begin{defi}[Kobayashi \cite{Kobayashi_rep}]\label{QWlEEHxGsN}
Given $f:\Nat\to\Nat$ we say that $X$ is $f$-compressible if there exists $Y$ 
which computes $X$ via an oracle Turing machine which queries, for each $n$, 
at most the first $f(n)$ digits of $Y$
for the computation of $X\restr_n$.\footnote{The 
reader who is familiar with monotone complexity from Levin in \cite{levinthesis,Levin:73}
(also discussed in \cite[Section 3.15]{rodenisbook}) will note that if $X$ is $f$-compressible for a computable
function $f$, then $f$ is an upper bound on the monotone complexity of $X$.}
\end{defi}
Note that, since every real is computable from itself with identity use, Definition \ref{QWlEEHxGsN} only makes
sense if $f(n)$ occasionally dips well below $n$. This feature contrasts 
a standard caveat  that is often assumed in computability theory for convenience, 
that the oracle-use in relative computations is strictly increasing. 
%
Kobayashi did not necessarily require that $f$ is computable in this definition, but
added effectivity requirements in the statements of his results. 
We formulate the corresponding notion of incompressibility of a real $X$ based on
Definition \ref{QWlEEHxGsN} as follows.
\begin{defi}[Kobayashi incompressibility]\label{u7zzXYCGcJ}
We say that a real $X$ is Kobayashi incompressible if
it is not $f$-compressible for any function $f$ such that $n-f(n)$ is unbounded.
\end{defi}

Note that every set $X$ is $(n-c)$-compressible for every constant $c$.
Indeed, given $c$ one can consider $Y$ such that $X=X\restr_c\ast Y$, and by hardwiring $X\restr_c$ into a Turing
machine we can compute $X$ from $Y$ with oracle-use $n-c$.

Kobayashi showed that the class of the incompressible streams $X$ of Definition \ref{u7zzXYCGcJ} 
has measure 1.
We will see in the following that, in fact, this definition is equivalent to \ml randomness.
Furthermore, if Turing computability in this definition is replaced with stronger reducibilities, 
then we get alternative definitions of Kurtz randomness\footnote{originally from 
Kurtz \cite{Kurtz:81} and further studied by Wang \cite{wangthesis}
and Downey Griffiths and Reid \cite{566308}.}  
and a strong version of computably bounded randomness\footnote{introduced and studied by
Brodhead, Downey and Ng \cite{BrodheadDN12}.} which we call granular randomness,
respectively. We state these results in Section \ref{a94G2m4L32}, deferring their proofs in latter sections.
It is interesting to note that 
these alternative definitions do not involve measure or prefix-free machines, so they are unique
in that they only use notions from classical computability theory.
It is curious that Kobayashi's simple and natural notion of compressibility 
has remained rather obscure, and does not even feature in the
encyclopedic books on Kolmogorov complexity and computability \cite{Li.Vitanyi:93,rodenisbook, Ottobook}.\footnote{Of the
 two citations to Kobayashi's work in \cite{Li.Vitanyi:93} one is about a somewhat known result
regarding the structure of one-tape nondeterministic Turing machine time hierarchy and the other is \cite{ipl/Kobayashi93}.
Incidentally, the results in the latter paper were independently reproved  by 
Becher, Figueira, Grigorieff and Miller \cite{jsyml/BecherFGM06} (along with other original results).}

\subsection{Oracle-use in computations}\label{B1WlpjGBcu}
Note that if $f$ is non-computable, then 
the condition in Definition \ref{QWlEEHxGsN} does not necessarily mean that $X$ is computable
from $Y$ with oracle use $f$. The results we present often hide a
non-standard notion of oracle-use in computations, and for this reason we introduce some basic terminology.
We define {\em oracle-use} in a computation of $X$ from $Y$ through an oracle Turing machine in the
standard way, as the function $n\mapsto f(n)$ which 
indicates, for each $n$, the largest position in $Y$ which was
queried during the computation of $X(n)$. Note that this oracle use is computable in the oracle $Y$
(but in general non-computable), and
it is {\em adaptive}, in the sense that it depends on the oracle $Y$.
Another standard notion is the oracle-use of a truth-table reduction $X\leq_{tt} Y$.
In this case the {\em oracle-use of the  truth-table reduction} is the
function $n\mapsto g(n)$ which indicates, for each $n$, the 
the largest position in $Y$ which occurs in the truth-table corresponding to the computation of $X(n)$.
Note that the oracle-use of a truth-table reduction is computable and {\em oblivious} in the sense that
it does not depend on the oracle $Y$. Finally a weak-truth-table reduction 
$X\leq_{wtt} Y$ is exhibited by a Turing machine $M(n)$ and a computable function $h$ such that
the oracle-use of $M^Z(n)$ is bounded above by $h(n)$ for all oracles $Z$ and all numbers $n$
such that $M^Z(n)\de$.
In this case $h$ is called the oracle-use of the weak-truth-table reduction, and it is {\em oblivious}
and computable by definition.

We now introduce a non-standard definition.
Day \cite{day_process_tt} used the following notion in order to characterize various notions of
algorithmic randomness (see Section \ref{A8YTlyLStn}).
%
We say that $X$ is totally Turing reducible to $Y$ with oracle-use $f$ is there is a total Turing machine $M$
which computes $X$ with oracle $Y$ and oracle-use $f$.
Recall that  $X\leq_{tt} Y$ if and only if there is a total Turing machine $M$ 
(\ie such that $n\mapsto M^Z(n)$ is total for all $Z$) which computes $X$ with oracle $Y$.
The truth-table oracle-use is oblivious and computable 
while the  oracle-use of total reductions is adaptive and could be incomputable.
However the oracle-use of a total reduction has a computable upper bound, and it is computable in
the oracle $Y$. Day \cite{day_process_tt} provided 
characterizations of various notions of algorithmic randomness
based on the oracle-use in total reductions. We briefly discuss these contributions in Section 
\ref{A8YTlyLStn}, in the context of the present paper.


\subsection{Previous work on Kobayashi compressibility}\label{A8YTlyLStn}
An adaptation Kobayashi's compressibility notion for  resource-bounded computations was considered in
Balc{\'a}zar, Gavald{\`a}, and Hermo
\cite{balczar1994infinite,1996compressibility}, where it was shown that 
for logarithmic initial segment complexity (\ie when the plain or prefix-free Kolmogorov complexity
of the sequence is $\bigo{\log n}$),
the uniform and nonuniform approaches coincide,
both in the resource bounded and resource unbounded case. 
In particular, they showed that\footnote{The reader may recall that $\bigo{\log n}$ initial segment complexity
means the same (modulo additive constants) for plain or prefix-free Kolmogorov complexity, since the latter is at most
two times the plain complexity.}
\begin{equation*}
\parbox{13cm}{if $X$ has has initial segment complexity  $\bigo{\log n}$ then there
exists some $Y$ that computes $X$ with oracle-use $\bigo{\log n}$.}
\end{equation*}
Another observation from \cite{balczar1994infinite,1996compressibility} 
is that for constant oracle-use bounds and polynomial time,
Kobayashi's notion coincides with the polynomial computable functions and also with the corresponding resource-bounded
version in terms of initial segment complexity relative to the length of the initial segment, which was studied by
Loveland \cite{Loveland:69}.

Such uniform notions of compressibility for infinite streams based on Kobayashi's report were later used by
Balc{\'a}zar, Gavald{\`a} and Siegelmann in 
\cite{Balcazar:2006} in order to give a characterization of 
the computational power of recurrent neural networks in terms of the
Kolmogorov complexity of their weights. A hierarchy theorem regarding resource-bounded compressibility
notions based on functions $f$ of increasing growth rates is also included in 
\cite{Balcazar:2006}.

The notion of relative computations with occasionally 
sub-linear oracle-use that is behind Kobayashi's notion of compressibility
also initiated a study of strong reducibilities in a series of three never-before-cited papers by Habart 
\cite{Habart90, Habart91, Habart92}. From these papers, \cite{Habart90} is directly related to
algorithmic randomness notions.
Given a computable $f:\Nat\to\Nat$ let $\mathcal{S}(f)$ be the class of reals $X$ which are $f$-compressible according to
Definition  \ref{QWlEEHxGsN}. Given a subclass $\mathcal{C}$ of the class $\omega^{\omega}$ of functions
from the natural numbers to the natural numbers, Habart defined $X$ to be 
{\em $\mathcal{C}$-incompressible} (or incompressible with respect to $\mathcal{C}$) if 
\begin{equation}\label{g4LjTJRPt}
\forall f\in\mathcal{C},\ [X\in \mathcal{S}(f)\Rightarrow \mathcal{S}(f)=2^{\omega}]
\end{equation}
where $2^{\omega}$ is the class of all infinite binary sequences.
Similarly, $X$ is {\em $\mathcal{C}$-compressible} if it is not $\mathcal{C}$-incompressible, \ie if
there exists $f\in\mathcal{C}$ such that $X\in \mathcal{S}(f)$ and there are $f$-incompressible reals.
Habart then shows that 
\[
\parbox{9cm}{\ml randomness is equivalent to incompressibility with respect to all non-decreasing functions
in $\omega^{\omega}$.}
\]
Moreover Habart shows that the following conditions are equivalent for any real $X$:
\begin{enumerate}[\hspace{0.5cm}(a)]
\item $X$ is compressible with respect to the computable functions in $\omega^{\omega}$;
\item $X$ is compressible with respect to the nondecreasing computable functions in $\omega^{\omega}$;
\item there exists a computable function $g$ such that $K(X\restr_{g(n)})\leq g(n)-n$ for all $n$.
\end{enumerate}
The follow-up papers \cite{Habart91, Habart92} explore variations of the above notion of compressibility and, amongst other
results, give a characterization of bi-immunity in terms of compressibility.

Day \cite{day_process_tt}, unaware of of the above developments, gave characterizations of
computable randomness, Schnorr randomness and Kurtz randomness in Kobayashi's spirit of 
compressibility, based on the concept of oracle-use in total reductions as we discussed
in Section \ref{B1WlpjGBcu},
and effective martingales.
His results are in the same spirit as our Section \ref{a94G2m4L32}, but different due to the fact that
he considers total reductions. For example, he shows that  a real $X$ is computably random if
for every $Y$ which computes $X$ through a total reduction, the oracle use on $n$ is bounded above 
by $n-c$ for some constant $c$ and all $n$. His characterizations of Kurtz randomness and
Schnorr randomness are slightly different but in the same spirit.
Franklin, Greenberg, Stephan and Wu \cite{tinyuse}
showed that a reals $X$ is  $f$-compressible (recall Definition \ref{QWlEEHxGsN}) 
for all computable non-decreasing unbounded
functions $f$ if and only if for all such functions $f$ we have $K(X\restr_{f(n)})\leq n$ for almost all $n$.
Moreover they gave characterizations of this class of reals in terms of notions from classical computability
theory, such as \ce traceability.\footnote{The authors of \cite{tinyuse} use their own terminology, although
many of their results concern Kobayashi's notion of compressibility.}

The main result from \cite{Kobayashi_rep} is the following characterization of the compressibility of
computably enumerable (c.e.\ from now on) sets in terms of asymptotic bounds on the oracle use in relative computations.
\begin{thm}[Kobayashi \cite{Kobayashi_rep}]\label{TM6wyBIcXu}
If $f$ is a non-decreasing computable function then the following are equivalent:
\begin{enumerate}[\hspace{0.5cm}(a)]
\item $\sum_n 2^{-f(n)}$ is finite;
\item every \ce set $X$ is computable by some set $Y$ with oracle-use bounded above by $f$.
\end{enumerate}
\end{thm}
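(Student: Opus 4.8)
The plan is to reduce both implications to a single statement about \emph{coding the counting function} of a \ce set, and then to read off the dichotomy from the convergence of $\sum_n 2^{-f(n)}$, which plays the role of a Kraft bound. The key observation is that in (b) the machine witnessing $X\le_T Y$ may depend on $X$, so we are free to hard-wire an enumeration of $X$ into it. Writing $k_n:=\abs{X\cap[0,n)}$ for the counting function, the classical counting bound for \ce sets (given $n$ and $k_n$, enumerate $X$ until $k_n$ elements below $n$ appear, thereby recovering $X\restr_n$) shows that $X\restr_n$ is computable from $Y\restr_{f(n)}$ iff $k_n$ is. Hence for a \ce set,
\[
X \text{ is } f\text{-compressible}\iff \exists Y\ \forall n\ \big[\,k_n \text{ is computable from } Y\restr_{f(n)}\,\big],
\]
and the theorem becomes a question about encoding the non-decreasing, unit-increment sequence $(k_n)_n$ (with $k_n\le n$) into the nested prefixes of a single oracle, using only $f(n)$ bits at stage $n$. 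This is also the monotone-complexity reading hinted at in the footnote to Definition \ref{QWlEEHxGsN}.

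For (a)$\Rightarrow$(b) I would give an online construction of $Y$ driven by the enumeration of $X$. Each time an element enters $X$ at a position $p$, every count $k_n$ with $n>p$ increases by one, and this new information must become visible no later than the moment the oracle has been read up to position $f(p+1)$; I charge this event a weight $2^{-f(p+1)}$. The total charge is $\sum_{p\in X}2^{-f(p+1)}\le\sum_n 2^{-f(n)}$, finite by (a); after discarding a finite initial part (hard-wired into the decoder) the remaining weight is at most $1$, so a Kraft--Chaitin style bookkeeping produces, in the limit, an oracle $Y$ from whose length-$f(n)$ prefix the count $k_{g(f(n))}$, where $g(\ell)=\max\{m:f(m)\le\ell\}$, can be decoded; since $g(f(n))\ge n$ this determines $X\restr_n$, and the decoder outputs it by running the built-in enumeration. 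Only the \emph{existence} of $Y$ is needed, which gives latitude in the bookkeeping.

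For (b)$\Rightarrow$(a) I would argue the contrapositive: assuming $\sum_n 2^{-f(n)}=\infty$, construct a single \ce set $X$ that is $f$-compressible via no pair $(\Phi_e,Y)$. The guiding principle is that a machine of use $\le f$ realises at most $2^{f(n)}$ distinct candidates for $X\restr_n$, so at a length where the admissible count values outnumber $2^{f(n)}$ the set can be steered, monotonically, onto a count that no oracle decodes. Divergence of the Kraft sum is exactly what supplies infinitely many such roomy levels, and it is the reservoir that a finite-injury priority construction consumes, one requirement $R_e$ (defeat $\Phi_e$) at a time, with higher-priority climbs permitted to injure lower-priority ones only finitely often.

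The hard part, and the place where the condition is genuinely needed, is this last construction. The naive attempt---pick a fresh large length for $R_e$ and climb $X\restr_n$ out of the $\le 2^{f(n)}$ forbidden strings---fails, because a machine that hard-codes a correct finite approximation of $X$ confines the monotone climb to a bounded set of maneuvers, and such confining machines cannot all be dodged at a single length (this is precisely why (b) holds whenever $n-f(n)$ is bounded). The role of divergence is to guarantee levels at which no decoder can keep pace with the counting function, breaking the confinement and giving the monotone climb somewhere to go. Quantifying that ``somewhere to go'' against the enumeration-only constraint, and scheduling the requirements over the divergent levels so that the whole construction settles, is the main technical obstacle.
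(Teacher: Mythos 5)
Your reduction to the counting function and your (a)$\Rightarrow$(b) construction are essentially sound: charging $2^{-f(p+1)}$ to each enumeration event and taking $Y$ to be (a scaled tail of) the \lcer $\sum_{p\in X}2^{-f(p+1)}$ does work, since knowing $Y$ to precision $2^{-f(n)}$ yields a stage after which no $p<n$ enters $X$, whence $X\restr_n$ is read off from the built-in enumeration. This is the same mechanism as the paper's route, which simply cites the stronger fact (from the Omega literature) that $\Omega$ itself can serve as $Y$. The problem is the converse direction, which you explicitly leave open: ``quantifying that somewhere to go \ldots is the main technical obstacle'' is precisely the content of the theorem, so as it stands the proposal does not prove (b)$\Rightarrow$(a). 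Worse, the guiding principle you propose for it --- compare the number of admissible count values at a single length $n$ against the $2^{f(n)}$ candidates realizable with use $f(n)$ --- is quantitatively wrong in the critical case: for $f(n)=\lfloor\log n\rfloor$ the sum diverges, yet $2^{f(n)}$ is comparable to $n$, which always matches or exceeds the number of possible values of $k_n$, so no single length is ever ``roomy'' in your sense.

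The missing idea is an \emph{amortized} Kraft-inequality count over a block of lengths, not a pigeonhole count at one length. Fix disjoint intervals $I_e=[m_e,m_{e+1})$ with $\sum_{i\in I_e}2^{-f(i)}>2$ (possible exactly because the sum diverges), and let $R_e$ attack the $e$-th functional $\Phi_e$ with oblivious use $f$ only inside $I_e$: whenever the current $X_s\restr_{t+1}$ is a prefix of $\Phi_e^{\tau}[s]$ for some $t\in I_e-X_s$ and some $\tau$ of length $f(t)$, enumerate the least such $t$. Each such action permanently invalidates that $\tau$ (its computation is now committed to an outdated initial segment), and monotonicity of $f$ together with the increasing order of the enumerated $t$'s makes the invalidated strings pairwise incomparable, hence a \pf set. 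If some $Y$ satisfied $\Phi_e^Y=X$, the requirement would act on every $t\in I_e$, producing a \pf set of strings of total weight $\sum_{i\in I_e}2^{-f(i)}>2$, contradicting Kraft's inequality. Note that no priority injury ever occurs, since distinct requirements enumerate into disjoint intervals; the finite-injury machinery you anticipate is unnecessary. Until you supply a counting argument of this kind, the hard half of the theorem is unproved.
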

This is an elegant characterization of the compressibility of \ce sets.
Other results on this topic, such as
the work in \cite{BarzdinsCe, DBLP:journals/siamcomp/Kummer96, apal/BarmpaliasL13,ipl/BarmpaliasHLM13},
tend to focus on the Kolmogorov complexity of the initial segments of \ce sets.
In fact, Theorem \ref{TM6wyBIcXu} remains true if we add a third clause saying
`every \ce set $X$ is computable by $\Omega$ with oracle-use bounded above by $f$' where 
$\Omega$ is Chaitin's halting probability.  We give a short proof of this slightly enhanced 
result in Section \ref{8J69q48pEH}.

In the last decade, a number of vaguely related results in the spirt of Theorem \ref{TM6wyBIcXu} have emerged in the
literature. Solovay \cite{Solovay:75} (also see \cite[Section 3.13]{rodenisbook}) and 
Tadaki \cite{Tadaki:2009_72}
considered the problem of how many bits of $\Omega$ are needed in order to compute
the domain of the universal \pf machine $U$ up to the strings of length $n$. Tadaki showed that, 
if we restrict the question to computable use-functions, then 
the answer is given by the computable functions of the type $n-f(n)$ 
such that $\sum_n 2^{-f(n)}$ is finite. Similar asymptotic conditions on the oracle use in relative computations between
\ce reals were obtained in \cite{asybound}.

A typical case in Theorem \ref{TM6wyBIcXu} is for logarithmic bounds. We get that if $\epsilon>1$ 
then every \ce set is $(\epsilon\cdot\log n)$-compressible while  there are \ce sets that are not 
$(\log n)$-compressible. A result with the same flavor was recently presented in \cite{omefom}, where
it was shown that halting probabilities of universal \pf machines are computable from eachother with use
$n+\epsilon\cdot\log n$ for any $\epsilon>1$, but this statement is no longer true for $\epsilon=1$.

\subsection{Our results}\label{a94G2m4L32}
There are two types of results that we present, both of which are related to Kobayashi's report 
\cite{Kobayashi_rep}. First, in Section \ref{MJxfHSbpZm} we consider versions of Kobayashi incompressibility
and relate them to known randomness notions. Second, in Section \ref{vi1sgKVeb6} we
consider some results related to the initial segment complexity of \ce sets and Theorem \ref{TM6wyBIcXu}.

\subsubsection{Kobayashi incompressibility and reductions}\label{MJxfHSbpZm}
Our first result is  that Kobayashi incompressibility coincides with \ml randomness. This fact has also been noticed
by Laurent Bienvenu (personal communication---unpublished). Recall that a
set of strings $U$ can be viewed as the set of reals that have a prefix in $U$. In this way, the
measure $\mu(U)$ is the Lebesgue measure of the corresponding open set of reals.
Also recall that a \ml test is 
a uniformly \ce sequence of sets $(U_i)$ with $\mu(U_i)<2^{-i}$
for all $i$, and a
real $X$ is \ml random
if it is not in $\cap_i U_i$ for any \ml test $(U_i)$. 
\begin{thm}[Kobayashi incompressibility and Turing reductions]\label{6dkUL3ULut}
The following are equivalent:
\begin{enumerate}[\hspace{0.5cm}(a)]
\item  $X$ is \ml random;
\item for every $Y$ with $X\leq_T Y$ the $Y$-use in any such computation of $X\restr_n$
is bounded below by $n-c$ for some constant $c$ and all $n$.
\end{enumerate}
\end{thm}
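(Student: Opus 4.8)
The plan is to observe that condition (b) is exactly the assertion that $X$ is Kobayashi incompressible in the sense of Definition \ref{u7zzXYCGcJ}: clause (b) fails precisely when some $Y$ computes $X$ with an oracle-use $f$ for which $n-f(n)$ is unbounded, i.e.\ when $X$ is $f$-compressible with unbounded savings. Thus it suffices to prove that \ml randomness and Kobayashi incompressibility coincide, and I would establish each implication in contrapositive form.

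For (a)$\Rightarrow$(b), I would show that a compressing reduction produces a \ml test capturing $X$. Suppose $M^Y=X$ with oracle-use $f$ and $n-f(n)$ unbounded. For each $k$ let $D_k$ be the set of strings $\tau$ that are \emph{minimal} under the prefix ordering such that $M^\tau(n)\de$ with use exactly $|\tau|$ for some $n\geq |\tau|+k$; minimality makes $D_k$ \pf. For $\tau\in D_k$ pick the least such $n=n(\tau)$ and put the cylinder of $M^\tau(n(\tau))$ into $H_k$. Since each such output has length $n(\tau)\geq |\tau|+k$, we get $\mu(H_k)\leq \sum_{\tau\in D_k}2^{-|\tau|-k}\leq 2^{-k}$, so $(H_k)_k$ is a \ml test after reindexing, and each $H_k$ is \ce uniformly in $k$. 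Finally $X$ is captured at every level: given $k$, choose $n$ with $f(n)\leq n-k$; then $\tau_0:=Y\restr_{f(n)}$ witnesses membership in the defining set for $D_k$, so the unique $\tau^\ast\in D_k$ with $\tau^\ast\preceq\tau_0$ is an initial segment of $Y$, whence $M^{\tau^\ast}(n(\tau^\ast))=M^Y(n(\tau^\ast))=X\restr_{n(\tau^\ast)}$ is a genuine prefix of $X$ lying in $H_k$. Hence $X\in\bigcap_k[H_k]$ and $X$ is not \ml random. The only delicate point here, the convergence of the test measure, is exactly what the passage to \emph{minimal} oracle prefixes resolves: weighting by $2^{-|\tau|}$ over the \pf set $D_k$ avoids the divergent sum that a naive length-by-length test would incur.

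For (b)$\Rightarrow$(a) I would argue the contrapositive by constructing a compressing reduction from a failure of randomness. If $X$ is not \ml random then, by the standard correspondence between \ml tests and lower-semicomputable continuous semimeasures, there is such a semimeasure $\mathbf{M}$ with $-\log\mathbf{M}(X\restr_n)\leq n-g(n)$ for a function $g$ with $\limsup_n g(n)=\infty$. I would then realise $\mathbf{M}$ by a monotone decoding: assign to each $\sigma$ a subinterval $I_\sigma\subseteq[0,1)$ of length $\mathbf{M}(\sigma)$ with $I_{\sigma 0},I_{\sigma 1}$ disjoint subintervals of $I_\sigma$ (possible since $\mathbf{M}(\sigma 0)+\mathbf{M}(\sigma 1)\leq\mathbf{M}(\sigma)$), let $M$ read bits of its oracle to narrow a dyadic interval and output $\sigma$ once that interval lies inside $I_\sigma$, and take $Y$ to be the binary expansion of a point of $\bigcap_n I_{X\restr_n}$. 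Reading the first $-\log\mathbf{M}(X\restr_n)+O(1)$ bits of $Y$ then suffices to output $X\restr_n$, so the resulting oracle-use $f$ satisfies $n-f(n)\geq g(n)-O(1)$, which is unbounded; thus $X$ is $f$-compressible with unbounded savings and (b) fails.

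I expect the assembly in this last step to be the main obstacle. The per-level descriptions extracted directly from the test---for instance the Kraft--Chaitin codewords of length $|\sigma|-i$ available for $\sigma\in U_i$ because $\mu(U_i)<2^{-i}$---are not nested across levels, so they cannot simply be concatenated into one oracle. The device that overcomes this is to pass through the induced continuous semimeasure and encode the single branch $X$ by interval bisection, which automatically yields one oracle $Y$ whose prefixes decode to initial segments of $X$; the residual care is the $O(1)$ loss from dyadic alignment and the verification that $M^Y$ is total with the correct outputs, neither of which affects the unboundedness of the savings.
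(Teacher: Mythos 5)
Your overall architecture matches the paper's---prove $\neg$(a) $\leftrightarrow$ $\neg$(b), building a \ml test from a compressing reduction in one direction and a compressing oracle from a failure of randomness in the other---but each direction as written has a genuine gap. In the direction $\neg$(b) $\Rightarrow$ $\neg$(a), your $(H_k)$ is not a \ml test because $D_k$ is not computably enumerable: $\tau\in D_k$ requires the $\Sigma^0_1$ fact that $\tau$ acquires a witness together with the $\Pi^0_1$ fact that no proper prefix of $\tau$ ever does, and the set of minimal elements of a c.e.\ set of strings is in general not c.e. The obvious effective repair (enumerate $\tau$ unless a prefix has already appeared) does not keep the enumerated set prefix-free over time, so the Kraft bound you rely on is lost. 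The paper resolves this tension the other way around: it enumerates \emph{all} outputs $\tau$ with $|\tau|>|\sigma|+k$ for some $\langle\sigma,\tau\rangle$ in the functional, so the test is c.e., and proves $\mu(V_k)\leq 2^{-k}$ by taking an arbitrary finite \pf subcover $\{\tau_i\}$ of the open set and pulling it back, via the consistency of the Turing functional, to a \pf set of oracle strings $\{\sigma_i\}$ with $|\sigma_i|<|\tau_i|-k$; Kraft's inequality applied to the $\sigma_i$ then gives the bound. The prefix-freeness you need is extracted at the measure-estimation stage, not built into the enumeration.

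In the direction $\neg$(a) $\Rightarrow$ $\neg$(b), the interval-bisection decoding does not give use $-\log\mathbf{M}(X\restr_n)+O(1)$, and the loss is not $O(1)$. When $\mathbf{M}(X\restr_n)\to 0$ the intersection $\bigcap_n I_{X\restr_n}$ is a single point $y^\ast$, and nothing prevents $y^\ast$ from lying flush against a non-dyadic endpoint of every $I_{X\restr_n}$ (the endpoints are sums of lower-semicomputable reals); then the dyadic interval determined by $Y\restr_m$ is not contained in $I_{X\restr_n}$ for any $m$ near $-\log\mathbf{M}(X\restr_n)$, and possibly for no $m$ at all, so the decoder is not even total with the right outputs. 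This is not a dyadic-alignment constant: it is essentially the reason the inequality $Km\leq KM+O(1)$ fails (G\'{a}cs), and since your $g$ may grow arbitrarily slowly, an unbounded overhead cannot be absorbed into the conclusion that $n-f(n)$ is unbounded. The paper avoids semimeasures entirely: by Schnorr's theorem non-randomness gives, for every $c$, some $n$ with $K(X\restr_n)\leq n-c$; one chooses $n_k$ with $K(X\restr_{n_k})\leq n_k-c_{k-1}-k$ and lets $Y$ be the concatenation of the shortest \pf descriptions of the $X\restr_{n_k}$. Prefix-freeness of the domain of $U$ makes the concatenation self-delimiting, and an easy induction shows the total length $c_k$ satisfies $c_k\leq n_k-k$, which is exactly the required dip in the oracle use. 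You should replace your second step with this (or an equivalent) prefix-free concatenation argument.
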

Next, we consider the notion of  Kobayashi compressibility, as it was discussed in Section
\ref{fs5htfcyZb}, but with respect to strong reducibilities like truth-table and weak truth-table
reducibility. 
For example, 
given a computable $f:\Nat\to\Nat$ we can say that $X$ is truth-table $f$-compressible 
if there exists $Y$ which
truth-table computes $X$ with oracle-use bounded above by $f$.
Similar definitions apply with respect to weak truth-table reducibility.
It is now natural to ask whether these restricted notions of incompressibility characterize known
variants of algorithmic randomness. It turns out that this is indeed the case, and for the truth-table
variant the corresponding notion is Kurtz randomness, which was originally introduced in \cite{Kurtz:81}.
A \ml test $(U_i)$ is a {\em Kurtz test} if $U_i$ if finite for each $i$.\footnote{This formulation is due to 
Wang \cite{wangthesis}.}
A real
is {\em Kurtz random}
if it is not in $\cap_i U_i$ for any Kurtz test $(U_i)$.

\begin{thm}[Kobayashi incompressibility and truth-table reductions]\label{22L9TJlXkk}
The following are equivalent:
\begin{enumerate}[\hspace{0.5cm}(a)]
\item  $X$ is Kurtz random;
\item for every $Y$ such that $X\leq_{tt} Y$ the $Y$-use in any such computation of $X\restr_n$
is bounded below by $n-c$ for some constant $c$ and all $n$.
\end{enumerate}
\end{thm}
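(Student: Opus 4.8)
The plan is to prove each implication in contrapositive form, so that the theorem becomes the assertion that $X$ fails to be Kurtz random precisely when $X$ is truth‑table $f$‑compressible with unbounded savings, in the sense of Definition \ref{QWlEEHxGsN}: there is a $Y$ with $X\leq_{tt}Y$ whose oracle‑use $g$ (for computing $X\restr_n$) satisfies $\sup_n\,(n-g(n))=\infty$. Here I use throughout that the oracle‑use of a truth‑table reduction is oblivious and computable, and that $\leq_{tt}$ means computation by a total machine. The whole argument runs in exact parallel with the Turing/\ml case of Theorem \ref{6dkUL3ULut}; the one genuinely new point is the bookkeeping that matches \emph{finiteness} of the test components with \emph{totality} (equivalently, a computable use bound) of the reduction.

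For the easy direction, $\neg(b)\Rightarrow\neg(a)$, I would start from a total $\Phi$ with $\Phi^Y=X$ and computable oblivious use $g$ satisfying $n-g(n)\geq i+1$ for suitable $n$. For each $i$ search computably for such an $n_i$ and put $V_i=\{\,\Phi^{\rho}\restr_{n_i}\ :\ \rho\in 2^{g(n_i)}\,\}$. Because $\Phi$ is total and $g(n_i)$ is computable, $V_i$ is a finite, uniformly computable set of strings of length $n_i$ with $\abs{V_i}\leq 2^{g(n_i)}$, whence $\mu([V_i])\leq 2^{g(n_i)-n_i}<2^{-i}$; and $X\restr_{n_i}=\Phi^{Y\restr_{g(n_i)}}\restr_{n_i}\in V_i$. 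Thus $(V_i)$ is a uniformly \ce sequence of \emph{finite} sets with $\mu([V_i])<2^{-i}$ and $X\in\bigcap_i[V_i]$, i.e. a Kurtz test capturing $X$. The finiteness, which is exactly what upgrades a \ml test to a Kurtz test, is immediate from the oblivious computable use.

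For the hard direction, $\neg(a)\Rightarrow\neg(b)$, I would first replace the given test by a tamer one: if $X$ is not Kurtz random then $X$ lies in a \pz null class, and clopen approximations of its complement yield a uniformly computable \emph{nested} sequence of clopen sets $F_k$ (finite, each consisting of strings of a single computable length $\ell_k$) with $\mu([F_k])<2^{-k}$ and $X\restr_{\ell_k}\in F_k$; this equivalence is standard. Next I thin the scales, choosing indices $j_0<j_1<\cdots$ with $j_{k+1}>\ell_{j_k}$, so that $\mu([F_{j_{k+1}}])<2^{-\ell_{j_k}-1}=\tfrac12\,\mu([X\restr_{\ell_{j_k}}])$. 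The effect of this choice is that the number $N_k$ of strings of $F_{j_{k+1}}$ extending $X\restr_{\ell_{j_k}}$ obeys $N_k<2^{\Delta_k-1}$, where $\Delta_k=\ell_{j_{k+1}}-\ell_{j_k}$, so $\lceil\log_2 N_k\rceil\leq\Delta_k-1$: along the path of $X$ each step is strictly sub‑maximal. I then build a computable injective nested prefix‑free code $c$ on $\bigcup_k F_{j_k}$, placing below the codeword of each $\sigma$ balanced codewords of length $\abs{c(\sigma)}+\lceil\log_2(\text{number of extensions})\rceil$. Telescoping the sub‑maximality gives $\abs{c(X\restr_{\ell_{j_k}})}\leq\ell_{j_0}+\sum_{m<k}(\Delta_m-1)=\ell_{j_k}-k$. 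Setting $Y=\lim_k c(X\restr_{\ell_{j_k}})$ (well defined by nesting), the reduction $\Phi$ on input $n$ reads the first $g(n):=\ell_{j_k}-k$ bits of the oracle, where $\ell_{j_k}$ is the least sampled length $\geq n$, decodes the unique prefix‑free codeword contained in them (prefix‑freeness makes it unique, and it is $c(X\restr_{\ell_{j_k}})$ on the oracle $Y$), and outputs the decoded string truncated to $n$; on any other oracle it outputs $0^n$. Since each $F_{j_k}$ is finite and $g$ is computable, $\Phi$ is total with computable oblivious use, i.e. a genuine truth‑table reduction, and $n-g(n)\geq k$ at $n=\ell_{j_k}$, so the use is not bounded below by any $n-c$, contradicting (b).

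The main obstacle is this hard direction, and concretely the demand that a \emph{single} oracle $Y$, read by a total reduction, compress $X$ at infinitely many lengths simultaneously. Per‑scale compression is automatic, since each $X\restr_{\ell_k}$ lies in the small computable set $F_k$; the difficulty is that stitching these flat descriptions into one oracle forces the codes to be nested, and a naive nested code loses all savings whenever $X$ has ``full branching'' between consecutive sampled scales. The device that rescues the argument is the measure‑driven thinning $j_{k+1}>\ell_{j_k}$, which guarantees strictly sub‑maximal branching \emph{on the path of $X$}—exactly what forces the codeword of $X\restr_{\ell_{j_k}}$ below $\ell_{j_k}$ by the unbounded amount $k$, while prefix‑freeness and the finiteness of the $F_k$ keep the decoder total and its use computable. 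Getting this deficiency bookkeeping right, rather than any isolated estimate, is where the real work lies, and it is also the point at which the correspondence ``finite test $\leftrightarrow$ truth‑table reduction'' is made precise, in exact analogy with Theorem \ref{6dkUL3ULut}.
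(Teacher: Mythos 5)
Your proof is correct and follows essentially the same route as the paper's: the easy direction pushes the total functional forward to a finite, uniformly computable test, and the hard direction builds a nested computable code on a thinned subsequence of the test, your condition $j_{k+1}>\ell_{j_k}$ being exactly the paper's choice $q_{s+1}=p_s+s+1$ (which forces $q_{s+1}>d_{q_s}$). The only cosmetic difference is that the paper assigns fixed-length lexicographic codes of length $d_{q_s}-s$ at each level, whereas you use variable-length balanced codewords and recover the same deficiency $\ell_{j_k}-k$ by telescoping.
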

Downey, Griffiths and Reid \cite{566308} gave alternative characterizations of Kurtz randomness
in terms of tests and initial segment complexity. These characterizations can be used to give an alternative
proof of Theorem \ref{22L9TJlXkk}, as we note in the following.

If we consider Theorem \ref{6dkUL3ULut} with respect to weak truth-table reducibility, the corresponding
randomness notion
is what we call granular randomness in Theorem \ref{qTEYegz9as}, 
which expresses it as a restricted version of \ml
randomness in terms of tests, martingales, or \pf complexity.
Given a function $g$, we say that a 
\ml test $(V_i)$ is $g$-granular if for each $i$ each string in $V_i$ has length at most $g(i)$.
Moreover we say that a \ml test is computably granular if it is $g$-granular for some
computable increasing function $g$.

\begin{thm}[Granular randomness]\label{qTEYegz9as}
The following are equivalent for a real $X$:
\begin{enumerate}[\hspace{0.5cm}(a)]
\item $X\in \cap_i V_i$ for a computably granular \ml test $(V_i)$;
\item there exist a \ce (super)martingale $M$ and
a computable function $g$ such that
$\forall i\ M(X\restr_{g(i)})\geq i$;
\item there exists a computable function $g$
such that 
$\forall i\ K(X\restr_{g(i)})\leq g(i)-i$.\footnote{The referee has pointed out that this clause remains valid
if stated in terms of plain Kolmogorov complexity.}
\end{enumerate}
A real which does not have these properties is called granularly random.
\end{thm}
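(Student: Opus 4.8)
The plan is to prove the cycle (a) $\Rightarrow$ (c) $\Rightarrow$ (b) $\Rightarrow$ (a), reusing the two classical translations (test $\leftrightarrow$ supermartingale via the maximal inequality, and supermartingale $\leftrightarrow$ \pf complexity via the universal semimeasure), but carrying the granularity bound $g(i)$ through each step. The cleanest direction is (b) $\Rightarrow$ (a): given a \ce supermartingale $M$ with $M(X\restr_{g(i)})\ge i$ and, after scaling, $M(\emptyset)=1$, I would set $h(j)=g(2^{\,j+1})$ and let $V_j$ be the prefix-free set of strings $\sigma$ with $|\sigma|\le h(j)$ and $M(\sigma)\ge 2^{\,j+1}$ that have no proper prefix with this property. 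Lower semicomputability of $M$ makes $(V_j)$ uniformly \ce, every string in $V_j$ has length at most $h(j)$ with $h$ computable and increasing (so the test is computably granular), and the maximal inequality for supermartingales gives $\mu(V_j)\le M(\emptyset)/2^{\,j+1}<2^{-j}$. Taking $i=2^{\,j+1}$ yields $M(X\restr_{h(j)})\ge 2^{\,j+1}$, so a prefix of $X$ lies in $V_j$ and hence $X\in\cap_j V_j$.

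For (a) $\Rightarrow$ (c): starting from a $g$-granular \ml test $(V_i)$ with $X\in\cap_i V_i$, I would replace each $V_i$ by the set $V_i^{\ast}$ of its length-$g(i)$ extensions. This leaves the generated open set unchanged, so $\mu(V_i^{\ast})<2^{-i}$ forces $\abs{V_i^{\ast}}<2^{\,g(i)-i}$, and the $V_i^{\ast}$ remain uniformly \ce. Since $X\in V_i$ we have $X\restr_{g(i)}\in V_i^{\ast}$, so its position in a fixed enumeration of $V_i^{\ast}$ is an index of fewer than $g(i)-i$ bits; prefixing a self-delimiting code for $i$ gives $K(X\restr_{g(i)})\le g(i)-i+2\log i+\bigo{1}$. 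The exact bound $K(X\restr_{\tilde g(j)})\le \tilde g(j)-j$ is then recovered by reindexing: I would pick a computable increasing $\phi$ with $\phi(j)-2\log\phi(j)-\bigo{1}\ge j$ and put $\tilde g=g\circ\phi$, which is again computable and increasing. (In plain complexity the logarithmic term is unnecessary, since the input length already locates $i$; this is the referee's remark recorded in the footnote.)

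For (c) $\Rightarrow$ (b): I would invoke the universal \ce supermartingale $M$, which satisfies $M(\sigma)\ge c\cdot 2^{\,\abs{\sigma}-K(\sigma)}$ for a fixed $c>0$. Then $M(X\restr_{g(i)})\ge c\cdot 2^{\,g(i)-K(X\restr_{g(i)})}\ge c\cdot 2^{\,i}$, and after multiplying $M$ by a large enough positive constant (still a \ce supermartingale) this is $\ge i$ for every $i$, with the same computable $g$ as in (c). This closes the cycle.

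The genuinely delicate point is the additive-and-logarithmic slack in (a) $\Rightarrow$ (c). The obstacle is that the naive Kraft--Chaitin request set $\{(g(i)-i,\tau):\tau\in V_i^{\ast}\}$ has total weight $\sum_i \abs{V_i^{\ast}}\cdot 2^{-(g(i)-i)}\le\sum_i 1=\infty$, so one cannot directly mint \pf descriptions of length exactly $g(i)-i$; one must either pay a summable penalty (the $2\log i$ term, or argue in plain complexity) and then restore the clean bound $\tilde g(j)-j$ through the reindexing $\tilde g=g\circ\phi$. Verifying that this bookkeeping keeps $g$ computable and increasing and that the inequalities hold for \emph{every} $i$ rather than merely cofinitely many is the main thing to get right; the remaining steps are routine applications of the standard test--martingale--complexity correspondences.
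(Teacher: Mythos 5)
Your proof is correct, but it is organized differently from the paper's. The paper proves (a)$\Leftrightarrow$(b) and (a)$\Leftrightarrow$(c) separately, using (a) as a hub: for (a)$\Rightarrow$(b) it sets $M(\sigma)=\sum_i\mu_\sigma(V_i)$ (so that $X\in\cap_i V_i$ gives $M(X\restr_{g(n)})\geq n$ directly, with no appeal to the maximal inequality in that direction); for (b)$\Rightarrow$(a) it takes $U_i=\{\sigma: |\sigma|=g(i),\ M(\sigma)>i\}$ and reindexes by $V_i=U_{2^i}$, essentially your construction; for (c)$\Rightarrow$(a) it takes $V_i=\{\sigma: |\sigma|=g(i),\ K(\sigma)\leq|\sigma|-i\}$; and for (a)$\Rightarrow$(c) it invokes Kraft--Chaitin in one line. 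You instead close a cycle (a)$\Rightarrow$(c)$\Rightarrow$(b)$\Rightarrow$(a), which forces you to use one tool the paper avoids, namely the universal \ce supermartingale dominating $2^{|\sigma|-K(\sigma)}$ for the leg (c)$\Rightarrow$(b); that step is fine (indeed $2^{i}\geq i$, so the constant rescaling is harmless). The one place where your write-up is genuinely more careful than the paper's is (a)$\Rightarrow$(c): the naive Kraft--Chaitin request set $\{(g(i)-i,\tau):\tau\in V_i\}$ does have divergent weight $\sum_i 2^{i}\mu(V_i)$, exactly as you observe, and some reindexing (your $\tilde g=g\circ\phi$ absorbing the $2\log i$ cost of a self-delimiting code for $i$, or equivalently requesting length $|\tau|-i$ only for $\tau\in V_{2i}$) is needed to recover the clean bound $\tilde g(j)-j$ for every $j$; the paper's one-line appeal to Kraft--Chaitin silently assumes this repair. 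What the paper's hub-and-spoke layout buys is that neither the universal supermartingale nor the maximal inequality is needed; what your cycle buys is that each individual translation is the standard one and the bookkeeping of the granularity bound is made fully explicit.
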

Note that, given a granular test 
$(V_i)$,
without loss of generality we may assume that 
for each $i$, all strings in $V_i$ have length $g(i)$. Moreover since $\mu(V_i)\leq 2^{-i}$ for each $i$,
we may also assume that $|V_i|\leq 2^{g(i)-i}$ for each $i$. 
This means that
granular randomness is
weaker\footnote{in fact, strictly weaker. By diagonalizing against all granular \ml tests it is a simple exercise to construct a
granularly random real which is not random with respect to any of the finite randomness notions of
\cite{BrodheadDN12}.} than
all versions of finite randomness that was introduced and studied by 
 Brodhead, Downey and Ng in \cite{BrodheadDN12}. On the other hand,
 granular randomness is stronger\footnote{in fact, strictly stronger. By diagonalizing against all Kurtz 
 tests it is a simple exercise to show that
there are Kurtz random reals which are not granularly random.} 
than Kurtz randomness. 

\begin{thm}[Kobayashi incompressibility and bounded oracle-use]\label{EFigRO1nBQ}
The following are equivalent:
\begin{enumerate}[\hspace{0.5cm}(a)]
\item  $X$ is granularly random;
\item for every $Y$ such that $X\leq_{\textrm{wtt}} Y$ the $Y$-use in any such computation of $X\restr_n$
is bounded below by $n-c$ for some constant $c$ and all $n$.
\end{enumerate}
\end{thm}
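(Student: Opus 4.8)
The plan is to deduce Theorem~\ref{EFigRO1nBQ} from the three equivalent characterisations of non-granular randomness in Theorem~\ref{qTEYegz9as}, proving instead the equivalence of the negations of clauses (a) and (b): a real $X$ fails to be granularly random exactly when some weak truth-table reduction computes it with occasionally tiny oracle-use. The argument runs in parallel with the proofs of Theorems~\ref{6dkUL3ULut} and~\ref{22L9TJlXkk}, the only—but decisive—difference being that the computable use-bound $h$ attached to a wtt reduction lets us pin the compression down at \emph{computable} lengths, which is precisely what the granular test in Theorem~\ref{qTEYegz9as} demands.

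First I would show that if $X$ is not granularly random then a dipping wtt reduction exists, i.e.\ the direction (b)$\Rightarrow$(a) in contrapositive form. Starting from clause (a) of Theorem~\ref{qTEYegz9as}, I take a computably granular test $(V_i)$ with $X\in\cap_i V_i$ where, after the normalisation noted below that theorem, every string of $V_i$ has length $g(i)$ and $|V_i|\le 2^{g(i)-i}$. Thinning $(V_i)$ to a fast-growing computable subsequence $i_0<i_1<\cdots$ with $i_k\ge\sum_{j<k}\bigl(g(i_j)-i_j\bigr)+k$, I let $Y$ be a concatenation of blocks, the $k$-th having the computable length $g(i_k)-i_k$ and recording the index of $X\restr_{g(i_k)}$ in the enumeration of $V_{i_k}$. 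A machine that, on input $m$, finds the first $k$ with $g(i_k)>m$, reads block $k$, recovers the index, and outputs the $m$-th bit of the corresponding element of $V_{i_k}$ then computes $X$ from $Y$ with oracle-use on $X\restr_{g(i_k)}$ at most $\sum_{j\le k}(g(i_j)-i_j)\le g(i_k)-k$, so $n-\mathrm{use}(n)$ is unbounded. This is a genuine wtt reduction with computable bound $h(m)=\sum_{j\le k(m)}(g(i_j)-i_j)$; since the wtt notion constrains the use only \emph{when the machine halts}, the decoder may diverge on indices exceeding $|V_{i_k}|$ for spurious oracles, which is exactly why a c.e.\ (rather than finite) test suffices here, in contrast to the tt/Kurtz case of Theorem~\ref{22L9TJlXkk}.

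For the converse (a)$\Rightarrow$(b), suppose $X\leq_{wtt}Y$ via a reduction $M$ whose computable oracle-use $h$ dips, i.e.\ $n-h(n)$ is unbounded. Because $h$ is computable I can fix a computable increasing $g$ with $h(g(i))\le g(i)-i$ for all $i$, and set
\[
V_i=\bigl\{\tau\in 2^{g(i)}\ :\ \exists\rho\,\bigl(|\rho|\le g(i)-i\bigr)\wedge\bigl(M^{\rho}\restr_{g(i)}=\tau\text{ with use }\le|\rho|\bigr)\bigr\}.
\]
Each $\tau\in V_i$ is determined by one of at most $2^{g(i)-i}$ oracle prefixes, so $\mu(V_i)\le 2^{-i}$ and $(V_i)$ is a uniformly c.e.\ computably granular test. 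As the use of the computation of $X\restr_{g(i)}$ is at most $h(g(i))\le g(i)-i$, the prefix $\rho=Y\restr_{h(g(i))}$ witnesses $X\restr_{g(i)}\in V_i$; hence $X\in\cap_i V_i$ and clause (a) of Theorem~\ref{qTEYegz9as} gives that $X$ is not granularly random. Working with the test directly rather than with $K$ sidesteps the $O(\log n)$ overhead of converting a short oracle prefix into a self-delimiting program, so no correction term intrudes.

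The heart of the matter, and the step I expect to be most delicate, is this last direction: the compression must be produced at \emph{computably given} lengths $g(i)$, for \emph{all} $i$. In the Turing case of Theorem~\ref{6dkUL3ULut} this difficulty is absent, since non-ML-randomness is equivalent to $\sup_n N(X\restr_n)=\infty$ for a c.e.\ supermartingale $N$, a $\limsup$-type condition tolerating dips at oracle-dependent, non-computable places. Granular randomness is instead a ``$\forall i$ at a computable rate'' condition, and a reduction with merely adaptive use would supply the dips only at non-computable positions; it is exactly the computable use-bound $h$ of a wtt reduction that repairs this, since inverting $h$ converts its dips into a computable schedule $g$, after which the finitary counting bound $|V_i|\le 2^{g(i)-i}$ closes the argument. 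I would finally note that, as with Theorem~\ref{22L9TJlXkk}, the martingale and complexity characterisations of Theorem~\ref{qTEYegz9as} may be substituted for clause (a) throughout with only cosmetic changes.
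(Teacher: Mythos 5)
Your proposal is correct and follows essentially the same route as the paper: both directions pass through the granular-test characterization of Theorem \ref{qTEYegz9as}, with $\neg$(b)$\Rightarrow\neg$(a) obtained by collecting the outputs of oracle prefixes of length at most $g(i)-i$ into a uniformly c.e.\ test at the computable dip points of the use bound, and $\neg$(a)$\Rightarrow\neg$(b) by thinning the test and coding $X\restr_{g(i_k)}$ into short segments of the oracle $Y$, with divergence on spurious oracles absorbed by the wtt (rather than tt) requirement. The only difference is cosmetic: the paper builds $Y$ via the c.e.\ tree $T$ recycled from the Kurtz argument of Theorem \ref{22L9TJlXkk}, whereas you concatenate absolute index blocks of length $g(i_k)-i_k$; your thinning inequality $\sum_{j\le k}(g(i_j)-i_j)\le g(i_k)-k$ plays exactly the role of the paper's recursion $q_{s+1}=p_s+s+1$.
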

This last result is related to a result from 
Habart \cite{Habart90}, which asserts that
for each real $X$, property (c) of Theorem \ref{qTEYegz9as} is equivalent to $X$ being 
$\mathcal{R}$-compressible in the sense of definition \eqref{g4LjTJRPt}, where $\mathcal{R}$
here denotes the class of computable functions.

\subsubsection{Oracle-use and initial segment complexity of \ce sets and \ce reals}\label{vi1sgKVeb6}
The number $\Omega$
was introduced by Chaitin in \cite{MR0411829}
as the halting probability of a universal \pf machine, who also showed that it is \ml random.
Clearly $\Omega$ depends on the underlying universal \pf machine, but
the cumulative work of Solovay \cite{Solovay:75},
\CHKW \cite{Calude.Hertling.ea:01} and \KS \cite{Kucera.Slaman:01}
showed that these numbers are exactly the \lce \ml random numbers.
Finally we point out the following fact regarding \lce reals with highly complex initial segment.
As the universal halting probability, $\Omega$ computes all \lce reals including the halting problem itself. 
The following theorem qualifies the latter completeness property and relates to
the main result
of Kobayashi \cite{Kobayashi_rep}, Theorem \ref{TM6wyBIcXu}.

\begin{thm}\label{Etk37fTeOu}
Every \lce real $X$ can be computed from $\Omega$ with oracle-use
$g(n)=\min_{i\geq n} K(X\restr_n)$. 
\end{thm}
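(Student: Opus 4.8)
The plan is to present $X$ as the limit of a computable increasing sequence of dyadic rationals and to exploit the standard fact that an initial segment of $\Omega$ decides the halting of all short programs; the complexity bound on some initial segment $X\restr_{i^*}$ will then tell us exactly how many bits of $\Omega$ are needed to certify it.

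First I would dispose of the trivial cases: if $X$ is a dyadic rational (hence computable) the reduction needs no oracle queries and the bound holds vacuously, so I may assume $X$ is irrational and fix a computable strictly increasing sequence of dyadic rationals $x_s\nearrow X$. Next I would record the standard lemma that from $\Omega\restr_m$ one can compute the finite set $\{\sigma:|\sigma|\le m,\ U(\sigma)\de\}$: enumerate the domain of $U$ until the lower approximation to $\Omega$ exceeds the rational coded by $\Omega\restr_m$, at which point the remaining weight is below $2^{-m}$ and so no further program of length $\le m$ can ever converge. Consequently $\Omega\restr_m$ decides the halting of every fixed program of length $\le m$, and in particular it yields the set $A_m=\{\tau:K(\tau)\le m\}$ together with a shortest program for each member.

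The heart of the argument is to convert the $\Pi^0_1$ (upper-bound) information about the enumeration of $X$ into halting facts visible in the domain of $U$. For a dyadic rational $r$ let $q_r$ be a program that enumerates $x_s$ and halts as soon as $x_s>r$; then $q_r$ converges iff $X>r$, and by universality $q_r$ may be taken with $|q_r|\le K(r)+\bigoo$. Hence the predicate ``$X>r$'' is decided by $\Omega\restr_m$ whenever $K(r)+\bigoo\le m$. The reduction computing $X\restr_n$ then reads $\Omega$ bit by bit; after $m$ bits it forms $A_m$ and searches for some $\tau\in A_m$ with $|\tau|\ge n$ for which both thresholds $r=0.\tau$ and $r'=0.\tau+2^{-|\tau|}$ are already decided, with $q_r$ converging and $q_{r'}$ diverging (the latter certified because $|q_{r'}|\le m$). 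When such a $\tau$ is found the machine outputs $\tau\restr_n$ and stops. Soundness is immediate: the two decided thresholds force $0.\tau<X<0.\tau+2^{-|\tau|}$, so $\tau\prec X$ and $\tau\restr_n=X\restr_n$. For timeliness I would take $i^*\ge n$ achieving $m_0:=\min_{i\ge n}K(X\restr_i)=g(n)$ and apply the search to $\tau=X\restr_{i^*}$: the lower threshold converges and the upper diverges, and since $K(0.X\restr_{i^*})$ and $K(0.X\restr_{i^*}+2^{-i^*})$ are both $K(X\restr_{i^*})+\bigoo=m_0+\bigoo$, both are decided once we have read $g(n)+\bigoo$ bits of $\Omega$.

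The main obstacle will be bookkeeping the additive constants so that the use is genuinely bounded by $g(n)$ rather than $g(n)+\bigoo$: this requires choosing the universal machine so that the coding constant for the threshold routine is absorbed, and checking that passing from a length-$i^*$ string to its successor $0.\tau\mapsto 0.\tau+2^{-|\tau|}$ raises complexity by at most that constant (with the all-ones boundary case, where the upper threshold is $1$, handled separately using $X<1$). A secondary point to verify is coherence of the functional: because every halting output is a correct initial segment of $X$ and $g$ is non-decreasing, the outputs for different $n$ agree and the resulting per-$n$ use is a legitimate Kobayashi oracle-use bound.
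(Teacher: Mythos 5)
Your argument takes a genuinely different route from the paper's and most of it is sound, but it proves a slightly weaker statement than the one asserted. Using $\Omega\restr_m$ to compute the domain of $U$ on programs of length at most $m$, and certifying a candidate $\tau$ with $|\tau|\ge n$ by deciding the two threshold facts $X>0.\tau$ and $X\le 0.\tau+2^{-\abso{\tau}}$ via programs of length $K(\tau)+\bigoo$, correctly yields a reduction with oracle-use $g(n)+\bigoo$ (reading $g$ as $\min_{i\ge n}K(X\restr_i)$, which is clearly the intended statement). The gap is exactly the one you flag and then try to wave away: the theorem claims oracle-use $g(n)$ on the nose, and your proposed repair --- choosing the universal machine so that the coding constant is absorbed --- is not available. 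Both $K$ and $\Omega$ are defined from the same fixed universal \pf machine, so you cannot retune it per real $X$, and the overhead of passing from a shortest description of $\tau$ to halting certificates for the threshold routines is a genuine composition constant; there is no reason it can be made zero, let alone simultaneously for the lower threshold, its successor, and the index of the approximation $(x_s)$. What you obtain suffices for the corollaries the paper draws from the theorem (where an additive constant washes out because $\sum_n 2^{-(f(n)-c)}$ converges iff $\sum_n 2^{-f(n)}$ does), but it is not the exact bound stated.

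The paper gets the exact bound because it never decodes anything from $\Omega$. It builds a Solovay test $J$ by enumerating $\Omega_{s+1}\restr_{K(X\restr_n)[s+1]}$ whenever the approximation to $X\restr_n$ changes or the approximation to $K(X\restr_n)$ drops; the weight of $J$ is charged injectively against the domain of $U$, so $J$ is a Solovay test, and since $\Omega$ is \ml random only finitely many of its members are true prefixes of $\Omega$. Past that point any further change in the approximation to $X\restr_n$ would plant a true prefix of $\Omega$ of length at least $g(n)$ into $J$, so the settling time of $\Omega\restr_{g(n)}$ dominates that of $X\restr_n$, and the reduction simply waits for the first $g(n)$ oracle bits to be confirmed in the approximation $(\Omega_s)$ and reads off $X_s\restr_n$ at that stage. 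If you want to keep your decoding approach, you would need an additional device of this settling-time kind to eliminate the constant; as written, the argument does not establish the theorem as stated.
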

Recall that $K(\sigma)\leq |\sigma|+K(|\sigma|)$ for all strings $\sigma$,
and that if $f$ is a computable function such that $\sum_n 2^{-f(n)}$ is finite,
then $K(n)\leq f(n)+\bigo{1}$ for all $n$. Hence Theorem 
\ref{Etk37fTeOu} implies that any \lce real is computable from $\Omega$ with oracle-use $n+f(n)$,
where $f$ is any computable non-decreasing function such that $\sum_n 2^{-f(n)}$ is finite.
The latter fact was shown in \cite{asybound}, where it was also shown to be optimal in the sense that
if $\sum_n 2^{-f(n)}$ is infinite then there exists a \lce real which is not computable from $\Omega$ 
with oracle-use $n+f(n)$. 

It would be interesting to find in which sense 
the upper bound of Theorem \ref{Etk37fTeOu} for individual \lce reals $X$ is optimal.
This might not be straightforward as computable \lce reals are computable from any $\Omega$ with zero
oracle-use and \ml random \lce reals are versions of $\Omega$ and are computed from themselves with identity
oracle-use. With regard to the latter case, we do know from \cite{omefom} that given any version of $\Omega$
we can find a random \lce real which is not computable from $\Omega$ with use $n+\log n +\bigo{1}$.

Barzdins \cite{BarzdinsCe} showed that the plain complexity of 
(the characteristic sequence of) every \ce set is bounded above by
$2\log n +\bigo{1}$. Then
Kummer \cite{DBLP:journals/siamcomp/Kummer96} 
showed that this upper bound is optimal, in the sense that there exist \ce sets $A$ such that
for some constant $c$ we have $C(A\restr_n)> 2\log n -c$ for infinitely many $n$. Here 
we present a corresponding optimal monotone upper bound for the \pf complexity of \ce sets.

\begin{thm}\label{qVH5QUn2qI}
If $h$ is a computable non-decreasing function then the following are equivalent:
\begin{enumerate}[\hspace{0.5cm}(a)]
\item the \pf complexity of every \ce set is bounded above by $h(n)+\log n + \bigo{1}$;
\item $\sum_n 2^{-h(n)}$ is finite.
\end{enumerate}
\end{thm}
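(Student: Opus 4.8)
The plan is to prove the two implications separately, after recording two standard facts. First, for a computable function $f$ one has $\sum_n 2^{-f(n)}<\infty$ if and only if $K(n)\leq f(n)+\bigo{1}$: the forward direction is a Kraft--Chaitin argument, and the reverse follows since $\sum_n 2^{-K(n)}\leq 1$. Thus clause (b) is equivalent to $K(n)\leq h(n)+\bigo{1}$. Second, I would isolate the following reduction for a \ce set $X$ with counting function $s(n)=|X\cap[0,n)|$: since $X\restr_n$ is recovered from $(n,s(n))$ by enumerating $X$ until $s(n)$ elements below $n$ appear, and conversely $(n,s(n))$ is read off from the string $X\restr_n$, we get $K(X\restr_n)=K(\langle n,s(n)\rangle)+\bigo{1}$, with the constant depending only on an index for $X$.

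For (b)$\Rightarrow$(a) I would use the Barzdins coding: given $n$ and $v=s(n)$ one computes $X\restr_n$ as above, so $X\restr_n$ is described by a self-delimiting program for $n$ (of length $K(n)\leq h(n)+\bigo{1}$ by the first fact) followed by $v$ written in exactly $\lceil\log(n+1)\rceil$ bits, whose number is known once $n$ is decoded. This yields $K(X\restr_n)\leq h(n)+\log n+\bigo{1}$ for every \ce set $X$.

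The substance is (a)$\Rightarrow$(b), which I would prove in contrapositive form as a Barzdins--Kummer-style \ce construction: assuming $\sum_n 2^{-h(n)}=\infty$, build a \ce set $X$ such that for every $c$ there is $n$ with $K(X\restr_n)>h(n)+\log n+c$; by the reduction it suffices to force $K(\langle n,s(n)\rangle)>h(n)+\log n+k$ for some $n$, for each requirement $R_k$. Partition $\Nat$ into computable blocks $B_k=[a_k,a_{k+1})$ whose upper portions carry weight $\sum 2^{-h(n)}\geq 2^{k+2}$ (possible by divergence). Call a pair $\langle n,v\rangle$ good for $R_k$ if $K_s(\langle n,v\rangle)\leq h(n)+\log n+k$ at some stage $s$; goodness is \ce and, by $\sum_{\tau}2^{-K(\tau)}\leq 1$, the pairs good for $R_k$ have total weight $\sum 2^{-(h(n)+\log n)}\leq 2^k$. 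Within $B_k$ the strategy keeps $s$ as low as possible: fill the leftmost positions of the block so that at level $v$ the whole tail $T_v$ of the block sits in row $v$, and climb to level $v+1$ only once every cell $\langle n,v\rangle$, $n\in T_v$, has been confirmed good. If the strategy ever gets stuck at a level, some occupied cell is permanently bad and $R_k$ is met; otherwise the opponent must make good, across all levels, cells of total weight
\[
\sum_{v}\sum_{n\in T_v}2^{-(h(n)+\log n)}=\sum_{n\in B_k}2^{-h(n)}\,\frac{n-a_k+1}{n},
\]
which, restricted to the upper portion of $B_k$ where $\tfrac{n-a_k+1}{n}\geq\tfrac12$, exceeds $2^k$, contradicting the budget. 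Hence the climb halts and $R_k$ is satisfied.

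The main obstacle is not the weight accounting but the interaction between blocks: the base count $p_k=s(a_k)$ feeding requirement $R_k$ includes all elements enumerated by lower requirements, so increases in $p_k$ shift every cell of $B_k$ upward and can waste levels. I would resolve this by observing that each $R_k$ climbs at most $|B_k|$ times and so acts only finitely often; therefore the lower requirements contribute at most $\sum_{j<k}|B_j|$ elements in total and eventually stabilize $p_k$. Choosing the blocks in increasing order so that the safe upper portion of $B_k$ still carries weight $\geq 2^{k+2}$ after discarding the at most $\sum_{j<k}|B_j|$ levels that this bounded \emph{free push} can cost, the weight argument above applies to the stabilized base and every $R_k$ is met, refuting clause (a). This recovers the Barzdins \cite{BarzdinsCe} and Kummer \cite{DBLP:journals/siamcomp/Kummer96} bounds as the special case $h(n)=\log n$.
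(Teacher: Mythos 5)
Your argument is correct and is essentially the paper's own proof: the direction (b)$\Rightarrow$(a) is the same Barzdins coding ($K(n)\leq h(n)+\bigo{1}$ followed by a $\ceil{\log (n+1)}$-bit count of $X\cap[0,n)$), and your contrapositive of (a)$\Rightarrow$(b) is the same block-by-block enumeration, with the weight $\sum_{n\in B_k}\frac{n-a_k+1}{n}\,2^{-h(n)}$ you extract from the good cells $\langle n,v\rangle$ being exactly the paper's count $\sum_{s\in(n_e,n_{e+1}]}(s-n_e)\,2^{-f(s)}$ of short descriptions of the successive versions of the initial segments $A_s\restr_t$, where $f(n)=h(n)+\log n$. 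The only genuine divergence is your ``free push'' discussion, which is careful but in fact unnecessary: enumerations by lower-indexed requirements only create additional versions of the monitored cells (equivalently, of the initial segments), so they can only increase the weight you force against the fixed budget, which is why the paper's formulation --- where $\mathcal{R}_e$ simply waits for short descriptions of $A\restr_n$ for all $n\leq n_{e+1}$ before acting --- needs no analogue of your block padding.
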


Recall from \cite{DJS2}
that a Turing degree $\mathbf{a}$ is array computable if there exists a function which is truth-table
reducible to the halting problem, which dominates all functions computable from $\mathbf{a}$.
Kummer \cite{DBLP:journals/siamcomp/Kummer96}  showed that array non-computable
\ce degrees contain \ce sets $A$ such that  $C(A\restr_n)> 2\log n -c$ for some constant $c$ and 
infinitely many $n$, while the plain complexity of the \ce sets of an array computable degree is bounded above
by $\log n + g(n)+\bigo{1}$ for any given non-decreasing unbounded computable function $g$.
This is sometimes known as the Kummer gap theorem (see \cite[Section 16.1]{rodenisbook}).

The \pf version of the Kummer gap theorem says that the upper bound 
$f(n)+\log n$
of Theorem \ref{qVH5QUn2qI} is tight for array non-computable \ce degrees 
(in the same sense that $2\log n$ was tight in the plain complexity case) and that $f(n)+g(n)+\bigo{1}$
is an upper bound for the \pf complexity of array computable \ce sets (where again, $g$ is any fixed 
computable unbounded non-decreasing function).
We state a representative case of this result, while the interested reader can verify that the proof that
we give in Section \ref{miPvnCjZSZ} applies to the general case.

\begin{thm}[Kummer gap for \pf complexity of \ce sets]\label{cJbypbgUJ}
If $\mathbf{a}$ is a \ce degree then
\begin{enumerate}[\hspace{0.5cm}(a)]
\item if $\mathbf{a}$ is array non-computable then it contains a \ce set whose \pf complexity is not bounded
above by $2\log n+\log\log n$;
\item if $\mathbf{a}$ is array computable then the \pf complexity of all of its \ce members is bounded above by
$\log n+2\log\log n+\bigo{1}$.
\end{enumerate}
\end{thm}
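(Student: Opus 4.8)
The plan is to follow the architecture of Kummer's gap theorem for plain complexity, replacing its two plain ingredients by prefix-free counterparts: the counting bound $|\{\sigma:C(\sigma)\le m\}|\le 2^{m+1}$ becomes the Kraft bound $|\{\sigma:K(\sigma)\le m\}|\le 2^{m}$, and the ad hoc ``length plus count'' descriptions become applications of the Kraft--Chaitin (KC) theorem. I would treat the two clauses separately and prove only the representative thresholds stated; the general case (with $f(n)+\log n$ in place of $2\log n$ and $f(n)+g(n)$ in place of $\log n+2\log\log n$, for computable $f$ with $\sum_n 2^{-f(n)}<\infty$ and any computable order $g$) then follows by the same argument after adjusting the requested weights.

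For clause (b) I would use that an array computable \ce degree is \ce traceable (Ishmukhametov): every function computable from $\mathbf{a}$ has a \ce trace with a computable order bound. Fixing a \ce $A\in\mathbf{a}$ with a computable enumeration, the count function $c_A(n)=|A\restr_n|$ is computable from $A$, so it has a \ce trace $\{T_n\}$ with $c_A(n)\in T_n$ for almost all $n$ and $|T_n|\le g(n)$, where after shrinking the bound we may take $g(n)=\log\log n$. Now I would feed the KC theorem, for every $n$ and every $k\in T_n$, a request of length $\lceil\log n+2\log\log n\rceil+c$ whose associated string is obtained by ``enumerating $A$ until $k$ elements below $n$ have appeared and outputting the resulting $A\restr_n$'' (legitimate as an oracle-free description, since $A$ is \ce). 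The total weight is $\sum_n |T_n|\,2^{-\lceil\log n+2\log\log n\rceil-c}\le 2^{-c}\sum_n \frac{\log\log n}{n(\log n)^2}$, which converges; choosing $c$ so that it is at most $1$, the KC theorem yields a \pfm, and the request for the true value $k=c_A(n)$ shows $K(A\restr_n)\le\log n+2\log\log n+\bigo{1}$. The finitely many $n$ with $c_A(n)\notin T_n$ are absorbed into the constant.

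For clause (a) I would build a \ce set $A\equiv_T B$, where $B$ is a fixed \ce set of the given array non-computable degree $\mathbf{a}$: coding $B$ into $A$ secures $B\le_T A$, while acting on $A$ only under a $B$-permission secures $A\le_T B$. The complexity requirements ask, for infinitely many thresholds $n$, that $A\restr_n$ escape the \ce class $\mathcal{K}_n=\{\sigma:|\sigma|=n,\ K(\sigma)\le 2\log n+\log\log n\}$, which by the Kraft bound satisfies $|\mathcal{K}_n|\le 2^{2\log n+\log\log n}=n^2\log n$. Thus I would reserve an interval ending at such an $n$ and, whenever the currently chosen $A\restr_n$ is threatened by a newly enumerated element of $\mathcal{K}_n$, move it to a fresh value; since fewer than $n^2\log n$ values are ever forbidden, slightly more than $2\log n+\log\log n$ bits of freedom on the interval always suffice. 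The role of array non-computability is exactly that its \emph{multiple permitting} delivers, on infinitely many such intervals, enough simultaneous $B$-permissions to realise this many distinct values of $A\restr_n$; this is the same polynomial-in-$n$ permitting strength that drives Kummer's plain $2\log n$ bound, which is why the extra lower-order $\log\log n$ costs nothing.

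The main obstacle I anticipate lies in clause (a): orchestrating the multiple-permitting machinery of an array non-computable degree so that, on infinitely many intervals, enough permissions arrive to push $A\restr_n$ off every element of the growing class $\mathcal{K}_n$, all while maintaining the global coding that guarantees $A\equiv_T B$. This is the genuinely construction-theoretic part and is where I would spend most of the effort. By contrast, clause (b) is essentially a weight computation once the \ce trace is in hand, the only care being to preserve the convergence of $\sum_n g(n)\,2^{-(\log n+2\log\log n)}$, and hence the exact $2\log\log n$ constant.
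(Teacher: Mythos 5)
Your clause (b) is essentially correct and is in substance the paper's own argument: the paper simply invokes Kummer's proof with plain complexity replaced by prefix-free complexity, and your version via c.e.\ traceability of array computable c.e.\ degrees plus the Kraft--Chaitin weight computation $\sum_n \log\log n\cdot n^{-1}(\log n)^{-2}<\infty$ is a correct way of carrying that out.

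Clause (a), however, has a genuine gap in the requirement structure. You fix a threshold $n$, reserve an interval ending at $n$ with ``slightly more than $2\log n+\log\log n$ bits of freedom'', and plan to move $A\restr_n$ to a fresh value whenever it is threatened by an element of $\mathcal{K}_n$, on the grounds that only $|\mathcal{K}_n|\le n^2\log n$ values are ever forbidden. But $A$ is c.e.: along any computable enumeration the string $A_s\restr_n$ ranges over a monotone chain of at most $n+1$ distinct values (each change must add a new element below $n$), no matter how many permissions arrive. The count $2^{k}$ of subsets of a $k$-bit interval is irrelevant, since you cannot visit them monotonically. The opponent can therefore kill every value you could ever realise at length $n$ at a cost of roughly $(n+1)\cdot 2^{-2\log n-\log\log n}\approx 1/(n\log n)$ of weight, so escaping $\mathcal{K}_n$ at a single prescribed $n$ is impossible. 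The divergence of $\sum_n 1/(n\log n)$ is exactly the resource that must be exploited, and it forces the requirement to read ``there exists some length $t$ in a whole block $(n_e,n_{e+1}]$ with $K(A\restr_t)$ too large'', the contradiction coming from a Kraft-inequality count of the weight the universal machine wastes across all lengths of the block each time a number enters $A$; this is the content of the construction for Theorem \ref{qVH5QUn2qI}. The paper then transplants that construction into an array non-computable degree not by explicit multiple permitting with coding of $B$ into $A$, but by the Downey--Jockusch--Stob characterization: such a degree contains a c.e.\ set $A$ that agrees with any prescribed c.e.\ set $W$ on infinitely many members of any computable interval partition, so one enumerates the ``wish list'' $W$ of the block construction and the agreement on infinitely many blocks satisfies infinitely many requirements. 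Your permitting-plus-coding plan could likely be repaired, but only after replacing the per-$n$ cardinality count by the per-block weight count; as written, the counting is off by an exponential.
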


Note that the upper bound for the case of array non-computable degrees is slightly higher than Kummer's
$2\log n$, as it follows from Theorems \ref{cJbypbgUJ} and \ref{qVH5QUn2qI}.
The same is true for the case of array computable degrees, where
Kummer's bound was $\log n+g(n)$ for an arbitrary unbounded computable non-decreasing function $g$.


\section{Characterizations of randomness notions}
The arguments we give for
Theorems \ref{6dkUL3ULut}, \ref{22L9TJlXkk} and \ref{EFigRO1nBQ}.
are based on proving $\neg$(a) $\leftrightarrow$ $\neg$(b) in each case.
We see two different ways of
proving $\neg$(a) $\leftarrow$ $\neg$(b) in each of these cases.
One is to construct $Y$ and the required reduction of $X$ to $Y$ based on the
definition of the randomness notion at hand, in terms of tests.
The second way is to use the Kolmogorov complexity definition of the
corresponding randomness notion.
We demonstrate the first methodology in the proof of Theorem \ref{6dkUL3ULut} and
demonstrate the second methodology in the proofs of Theorems 
\ref{22L9TJlXkk} and \ref{EFigRO1nBQ}.

\subsection{Proof of Theorem \ref{6dkUL3ULut}}\label{vdtpKQ7fpR}
We need to show that the following are equivalent:
\begin{enumerate}[\hspace{0.5cm}(a)]
\item  $X$ is \ml random;
\item for every $Y$ such that $X\leq_T Y$ the $Y$-use in any such computation of $X\restr_n$
is bounded below by $n-c$ for some constant $c$ and all $n$.
\end{enumerate}
It suffices to prove $\neg$(a) $\leftrightarrow$ $\neg$(b).
First, assume that $X$ is not \ml random. Then for every constant $c$ there exists some $n$
such that $K(X\restr_n)\leq n-c$. Hence for every $c$ there exists some $n$ and a description of
$X\restr_n$ of length at most $n-c$. We can use this fact in order to define a stream $Y$ which 
computes $X$ with oracle-use some function $g$ such that 
$\limsup_n \left(n-g(n)\right)=\infty$.
We define $Y$ by induction, as the concatenation of a suitably crafted sequence of strings $(\sigma_i)$.
Let $\sigma_0$ be the empty sequence and let $n_0=c_0=0$. 
Inductively assume that for some $k>0$ all 
$\sigma_i, i<k$ have been defined
and belong to the domain of the universal \pf machine $U$. 
Let $c_k$ be the sum of all $|\sigma_i|$, $i\leq k$. 
Let $n_k>n_{k-1}$ be a number such that $K(X\restr_{n_k})\leq n_k-c_{k-1}-k$
and let $\sigma_k$ be the shortest description of $X\restr_{n_k}$. This concludes the definition of $(\sigma_i)$.

Let $Y=\sigma_0\ast\sigma_1\ast\cdots$ and note that, since all $\sigma_i$ are in the domain of $U$ and $U$
is a \pf machine, for each $k$ we can uniformly compute $X\restr_{n_k}$ from $Y\restr_{c_k}$.
We now show by induction that for each $k$ we also have $c_k+k\leq n_k$.
Clearly this holds for $k=0$. Inductively assume that $k>0$ and $c_{k-1}+k-1\leq n_{k-1}$. Then
$c_k=c_{k-1}+K(X\restr_{n_k})\leq c_{k-1}+ n_k-c_{k-1}-k=n_k-k$ which concludes the induction step.
Note that since $U$ is \pf the oracle $Y$ can compute the sequence $(c_i)$. Therefore for each $n$
the string $X\restr_n$ can be uniformly computed by $Y$ with use $c_k$, where $k$ is the least number such that $n\leq n_k$. So assuming  that $\neg$(a) we have proved that $\neg$(b).

For the other direction\footnote{The referee has pointed out that this direction also follows directly from
the characterization of \ml randomness in terms of monotone complexity from Levin \cite{MR0366096}
and Schnorr \cite{Schnorr:73}. See
\cite[Theorem 6.3.10]{rodenisbook}.} assume that $\neg$(b) and let $Y$ be an oracle such that  $X=\Phi(Y)$ for
a Turing functional $\Phi$ such for each $c$ there exists $n$ such 
that the $Y$-use $\phi(X\restr_n)$ for the computation of $X\restr_n$ is less than $n-c$.
We can view $\Phi$ as a \ce set of tuples $\tuple{\sigma,\tau}$ which indicate that any
stream with prefix $\sigma$ is $\Phi$-mapped to a string or stream with prefix $\tau$.
We enumerate a \ml test $(V_i)$ as follows. Given $k$, for each
$\tuple{\sigma,\tau}$ in $\Phi$ such that $|\tau|>k+|\sigma|$ we enumerate $\tau$ into $V_k$.
By our assumption we have $X\in\cap_i V_i$, so it remains to show that $\mu(V_k)\leq 2^{-k}$ 
for each $k$. For a contradiction, suppose that $\mu(V_k)> 2^{-k}$ for some $k$.
Then there exists a finite \pf set of strings $\{\tau_i\ |\ i<t\}\subseteq V_k$ such that
\[
\sum_{i<t} 2^{-|\tau_i|}>2^{-k}.
\]  
By the way we enumerate $V_k$ and since $\Phi$ is a Turing functional, 
for each $\tau_i, i<t$ there exists some $\sigma_i$ such that
$\tuple{\sigma_i,\tau_i}\in\Phi$, $|\sigma_i|< |\tau_i|-k$ and the set $\{\sigma_i\ |\ i<t\}$ is prefix-free.
So 
\[
\sum_{i<t} 2^{-|\tau_i|}< \sum_{i<t} 2^{-|\sigma_i|-k}\leq 2^{-k}\cdot \sum_{i<t} 2^{-|\sigma_i|}\leq 2^{-k}
\]  
which contradicts the previous inequality. We can conclude that 
$\mu(V_k)\leq 2^{-k}$ for each $k$, which completes the proof that $X$ is not \ml random. 

\subsection{Proof of Theorem \ref{22L9TJlXkk}}\label{DFdfWxKDG}
We need to show that the following are equivalent:
\begin{enumerate}[\hspace{0.5cm}(a)]
\item  $X$ is Kurtz random;
\item for every $Y$ such that $X\leq_{tt} Y$ the $Y$-use in any such computation of $X\restr_n$
is bounded below by $n-c$ for some constant $c$ and all $n$.
\end{enumerate}
We show $\neg$(a) $\leftrightarrow$ $\neg$(b). 
Note that for $\neg$(a) $\to$ $\neg$(b)
 one can use the characterization of Kurtz randomness in terms of a restricted version of \pf complexity
from Downey, Griffiths and Reid \cite{566308} (just as we used such a characterization of \ml randomness
in Section \ref{vdtpKQ7fpR}).
Instead, here we use the formulation of 
Kurtz randomness in terms of tests, originally by Wang \cite{wangthesis}.
 
\paragraph{Proof that $\neg$(a) implies $\neg$(b).}{Assuming that $X$ is not Kurtz random,
there exists a computable sequence $(D_i)$ of finite sets of strings  (also known as a strong array) such that
$\mu(D_i)\leq 2^{-i}$  and $X\in D_i$ for each $i$. Moreover, without loss of generality we may assume that
for each $i$, the strings in $D_i$ have the same length $d_i$. 
For each $i$, the number of strings in $D_i$ is at most $2^{d_i-i}$.
Without loss of generality we may assume that for each $i$ we have 
$d_i<d_{i+1}$, $|D_i|=2^{d_i-i}$,  $d_0>1$ and
every string in $D_{i+1}$ has a prefix in $D_i$.
It suffices to construct a total Turing functional $\Phi$ such that for each $X\in \cap_i D_i$ there
exists some $Y$ such that $X=\Phi^Y$. As a basis for this functional,
we define a partial computable tree $T$ as a partial computable function
from strings to strings,
and an increasing sequence $(q_s)$ of indices of the array $(D_i)$,
 such that
 \begin{enumerate}[\hspace{0.5cm}(a)]
\item for each $s$ and each $\rho\in D_{q_s}$ we have $T(\rho)\de$ and $|T(\rho)|=|\rho|-s$;
\item if $T(\rho)\de$ then $\rho\in D_{q_s}$ for some $s$.
\end{enumerate}
Let $\lambda$ be the empty string.
We define the sequence $(q_s)$ along with $T$ and 
an auxiliary increasing computable sequence $(p_s)$ by simultaneous
recursion, and use $p_s$ as the length
of $T(\rho)$ for each $\rho\in D_{q_s}$. In this construction we view the image
$T(\rho)$ of $\rho$ as a code for $\rho$.

At stage 0, let $q_0=p_0=0$ and let $T(\lambda)=\lambda$.
At stage $s+1$ assume inductively that
$q_j, p_j, j\leq s$ have been defined. Then let $q_{s+1}=p_s+s+1$
and $p_{s+1}= d_{q_{s+1}}-s-1$.
Next, for each string $\rho\in D_{q_{s}}$ map the extensions of $\rho$ in $D_{q_{s+1}}$ 
in lexicographical order onto
the extensions of $T(\rho)$ of length $p_{s+1}$. Formally, let
let $\eta_t, t<k$ be a lexicographical list of all the strings in $D_{q_{s+1}}$ which
extend $\rho$. Moreover let
$\theta_t,t<k$ be a lexicographical list of the first $k$
extensions of $T(\rho)$ of length $p_{s+1}$ and define $T(\eta_t)=\theta_t$ for each $t<k$.
This completes stage $s+1$ and the inductive definition of $T$.

It is straightforward to verify that
$T$ is well defined, \ie that at step $s+1$ 
the required assignment of strings is possible.
Moreover the tree $T$ is clearly a 
computable map from strings of length $d_{q_{s+1}}$ to strings
of length $p_{s+1}= d_{q_{s+1}}-s-1$. So the properties (a), (b) above hold. Since step $s+1$ always
maps distinct extensions of $\rho$ of a certain length to distinct  extensions of $T(\rho)$
of a certain length, the map $T$ is a tree in the sense that for $\eta,\theta$ such that $T(\eta)\de,T(\theta)\de$
we have
$\eta\subseteq\theta$ 
if and only if $T(\eta)\subseteq T(\theta)$.
Since $T$ is a tree and the mapping in step $s+1$ was done in lexicographical order,
we have that for each $s$ and each real  $X\in \cap_{j\leq q_{s+1}} D_{j}$ 
the code $T(X\restr_{d_{q_s}})$ is defined and
for all $s$ the string $X\restr_{d_{q_s}}$ is uniformly computable from $T(X\restr_{d_{q_s}})$
which has length $d_{q_{s}}-s$.
For each real $X\in \cap_i D_i$ let $T(X)$ be the limit of all $T(X\restr_{d_{q_s}})$ as $s$ goes to infinity. 
Consider
the intervals $I_s=[d_{q_s}, d_{q_{s+1}})$.
Moreover let $\Phi$ be the total Turing functional 
such that for each $X,s$ and $n\in I_s$ the oracle-use
for the first $n$ bits of $X$ is $d_{q_{s}}-s$ and
\[
\Phi^X(n)=
\begin{cases}
T(X)(n), &\textrm{if $X\in \cap_{j\leq q_{s+1}} D_{j}$;}\\
0,& \textrm{otherwise.}
\end{cases}
\]
Then we can conclude that each $X\in \cap_i D_i$ is truth-table reducible to $T(X)$
and for each $s$ the oracle-use for the computation  of the first $d_{q_{s}}$
bits of $X$ is $d_{q_{s}}-s$. This concludes the proof of $\neg$(b) given $\neg$(a).}

\paragraph{Proof that $\neg$(b) implies $\neg$(a).}{
Assume that there exists some $Y$ such that $X\leq_{tt} Y$ and for each $c$ 
there exists $n$ such that the oracle-use in this reduction for the computation of $n$-bit strings 
is bounded above by $n-c$. Let $\Phi$ denote a total Turing functional corresponding to the given 
truth-table reduction  $X\leq_{tt} Y$. 
Also, let $(n_i)$ be a computable increasing sequence of numbers such that for
each $i$ the oracle-use with respect to $\Phi$ for the computation of $n_i$-bit strings is bounded above
by $n_i-i$. Then we may assume that the oracle-use in computations with respect to $\Phi$
is a non-decreasing computable function $f$ such that $f(n_i)\leq n_i-i$ for each $i$, which is
the same for all oracles.
The functional $\Phi$ can be seen as a computable set of tuples $\tuple{\sigma,\tau}$
indicating the fact that any real $Y$ with prefix $\sigma$ is $\Phi$-mapped to an extension of $\tau$.
Moreover we can assume that if $\tuple{\sigma,\tau}\in \Phi$ then for each $\rho\subseteq\tau$ there exists
$\eta\subseteq\sigma$ such that $\tuple{\eta,\rho}\in \Phi$. 
Using the fact that $f$ is the oracle use in $\Phi$,
we may also assume that if 
$\tuple{\sigma,\tau}\in \Phi$ then $|\sigma|=f(\tau)$.

For each $i$ consider the set $D_i$ of strings $\tau$ of length $n_i$ such that $\tuple{\sigma,\tau}\in \Phi$
for some $\sigma$. Then $(D_i)$ is uniformly computable, since $\Phi$ is a total Turing functional.
By the choice of $(n_i)$ we have $|\sigma|+i\leq |\tau|$ for each such pair
$\tuple{\sigma,\tau}$. For each $i$ let $M_i$ be the set of strings $\sigma$ such that
$\tuple{\sigma,\tau}\in \Phi$ for some $\tau\in D_i$. According to our hypothesis about $\Phi$,
all strings in $M_i$ have the same length $f(n_i)$, so $M_i$ is prefix-free. So
\[
\mu(D_i)=\sum_{\tau\in D_i} 2^{-|\tau|}\leq \sum_{\sigma\in M_i} 2^{-|\sigma|-i}=
2^{-i}\cdot \sum_{\sigma\in M_i} 2^{-|\sigma|}\leq 2^{-i}.
\] 
Since the sets $D_i$ are also finite and uniformly computable,  
$(D_i)$ is a Kurtz test. By our hypothesis we have $\Phi^Y=X$ for some $Y$. Hence $X\in \cap_i D_i$
and this concludes the proof that $X$ is not Kurtz random.}

\subsection{Proof of Theorems \ref{qTEYegz9as} and \ref{EFigRO1nBQ}}
The proof of Theorem \ref{qTEYegz9as}
 is a direct adaptation of the classic argument that shows the equivalence of the
expressions of \ml randomness in terms of tests, martingales and \pf complexity.
We need to show that
the following are equivalent for a real $X$:
\begin{enumerate}[\hspace{0.5cm}(a)]
\item $X\in \cap_i V_i$ for a computably granular \ml test $(V_i)$;
\item $M(X\restr_{g(n)})\geq n$ for all $n$, a \ce (super)martingale $M$ and a computable function $g$;
\item $K(X\restr_{g(i)})\leq g(i)-i$ for a computable function $g$.
\end{enumerate}

Suppose that (a) holds. Then without loss of generality we may assume that
there exists a computable nondecreasing function $g$ such that for each $i$,
all strings in $V_i$ have length $g(i)$. Then we can define 
$M(\sigma)= \sum_{i} \mu_{\sigma}(V_i)$.\footnote{Here $\mu_{\sigma}(V_i)$ denotes
the measure of $V_i$ relative to the set $[\sigma]$ of reals that have $\sigma$ as a prefix:  
$\mu_{\sigma}(V_i)=\mu_{\sigma}(V_i\cap [\sigma])\cdot 2^{|\sigma|}$.}
It can be easily verified that $M$ is a \ce martingale which meets (b) for any $X\in \cap_i V_i$.
Conversely, assume that (b) holds and define $U_i$ to be the set of strings of length $g(i)$ such that 
$M(X\restr_{g(i)})> i$. Also let $V_i=U_{2^{i}}$ for each $i$. Then $\mu(V_i)\leq 2^{-i}$ and
$(V_i)$ is granular with respect to the function $i\mapsto g(2^i)$.
Now assume that (c) holds. Then define $V_i$ be the set of strings $\sigma$ of length $g(i)$ such that
$K(\sigma)\leq |\sigma|-i$. Then $\mu(V_i)\leq 2^{-i}$ for each $i$ and any $X$ such that 
 $K(X\restr_{g(i)})\leq g(i)-i$ is in $V_i$. Hence we can deduce (a). Conversely, suppose that
 $X\in \cap_i V_i$,  $\mu(V_i)\leq 2^{-i}$ and all strings in $V_i$ have length $g(i)$.
Then using the Kraft-Chaitin theorem we can build a \pf machine $M$ such  that
$K(\sigma)\leq |\sigma|-i$ for all $i$ and for all $\sigma\in V_i$. This means that (a) implies (c).

Next, we turn to the proof of Theorem \ref{EFigRO1nBQ}.
We need to show that the following are equivalent:
\begin{enumerate}[\hspace{0.5cm}(a)]
\item  $X$ is granularly random;
\item for every $Y$ such that $X\leq_{wtt} Y$ the $Y$-use in any such computation of $X\restr_n$
is bounded below by $n-c$ for some constant $c$ and all $n$.
\end{enumerate}
This  proof is very similar to the argument we discussed in Section \ref{DFdfWxKDG}.
\paragraph{Proof that $\neg$(b) implies $\neg$(a).}{
Assume that (b) does not hold. Then $\Phi(Y)=X$ for some $Y$ and a Turing functional $\Phi$ with 
computable non-decreasing oracle-use $f$ on all oracles such that
for each $c$ there exists some $n$ such that $f(n)<n-c$. 
Let $(n_i)$ be a computable increasing sequence such that
$f(n_i)<n_i-i$ for each $i$. The functional $\Phi$ can be seen as a \ce set of tuples
$\tuple{\sigma,\tau}$ indicating that for every oracle $Y$ such that $\Phi(Y)$ is total,
if $Y$ is prefixed by $\sigma$ then $\Phi(Y)$ is prefixed by $\tau$. 
We may also assume that for each $\tuple{\sigma,\tau}$ in $\Phi$ we have
$f(|\tau|)=|\sigma|$.
Hence for every $i$, every $\tau$ of length 
$n_i$ and every $\sigma$ such that $\tuple{\sigma,\tau}\in \Phi$ the length of $\sigma$ is at most
$n_i-i$.
Let $V_i$ be the set of all $\tau$ of length $n_i$ such that
$\tuple{\sigma,\tau}\in \Phi$ for some $\sigma$. 
Then by the choice of $n_i$, there can be at most $2^{n_i-i}$ many distinct strings in $V_i$.
So $\mu(V_i)\leq 2^{-i}$ and $(V_i)$ is a computably bounded \ml test. Moreover $X\in\cap_i V_i$ so
$X$ is not computably bounded random.}

\paragraph{Proof that $\neg$(a) implies $\neg$(b).}{Conversely, assume that $X\in\cap_i V_i$
where $(V_i)$ is a \ml test such that for each $i$ the set $V_i$ contains strings of length $v_i$,
such that  $v_0>1$ and $v_j<v_{j+1}$ for all $j$.
Also, without loss of generality we may assume that for each $i$ we have 
$|V_i|\leq 2^{v_i-i}$ and
every string in $V_{i+1}$ has a prefix in $V_i$.
It suffices to construct a Turing functional $\Phi$ with computable oblivious oracle-use, 
such that for each $X\in \cap_i V_i$ there
exists some $Y$ such that $X=\Phi^Y$.

Consider the tree $T$ constructed in Section \ref{DFdfWxKDG} in terms of 
the Kurtz test $(D_i)$, and replace 
$(D_i)$ with $(V_i)$ and $(d_i)$ with $(v_i)$ in this construction. 
The only difference in the properties of $T$ is that
$T$ is now merely a \ce tree and not necessarily computable.
Similarly to the construction of $T$ in  Section \ref{DFdfWxKDG}, 
consider
the intervals $I_s=[v_{q_s}, v_{q_{s+1}})$.
Define $\Phi$ be the Turing functional 
such that for each $X,s$ and $n\in I_s$ the oracle-use
for the first $n$ bits of $X$ is $v_{q_{s}}-s$ and
\[
\Phi^X(n)=
\begin{cases}
T(X)(n), &\textrm{if $X\in \cap_{j\leq q_{s+1}} V_{j}$};\\
\un & \textrm{otherwise.}
\end{cases}
\]
Then we can conclude that each $X\in \cap_i V_i$ is truth-table reducible to $T(X)$
and for each $s$ the oracle-use for the computation  of the first $v_{q_{s}}$
bits of $X$ is $v_{q_{s}}-s$. This concludes the proof of $\neg$(b) given $\neg$(a).}

\section{Computing and compressing \ce sets and \ce reals}
\subsection{Proof of Theorem \ref{TM6wyBIcXu}}\label{8J69q48pEH}
We wish to prove a slightly stronger version of Theorem \ref{TM6wyBIcXu}. 
We show that given a non-decreasing computable function $f$,
the following are equivalent:
\begin{enumerate}[\hspace{0.5cm}(a)]
\item $\sum_n 2^{-f(n)}$ is finite;
\item every \ce set $X$ is computable by $\Omega$ with oracle-use bounded above by $f$;
\item every \ce set $X$ is computable by some set $Y$ with oracle-use bounded above by $f$.
\end{enumerate}

The implication (a)$\to$(b)
is from \cite{asybound} and
(b)$\to$(c) is trivial. It remains to show  $\neg$(a)$\to\neg$(c). 
So suppose that the sum in (a) is infinite and let $(\Phi_e)$ be an effective enumeration of all 
Turing functionals with 
oblivious oracle-use $f$. It suffices to construct a \ce set $X$ such that 
$X\neq\Phi_e^Y$ for all reals $Y$. Let $(m_e)$ be a computable increasing sequence such that 
\[
\sum_{i\in [m_e,m_{e+1})} 2^{-f(i)}>2
\hspace{1cm}
\textrm{for all $e$}
\]
and let $I_e=[m_e,m_{e+1})$.
The enumeration $(X_s)$ of $X$ is based on the effective enumeration of $(\Phi_e)$.
At stage $s+1$, consider the least $e\leq s$ such that $X_s\restr_{t+1}$ is a prefix of $\Phi_e^{\tau}[s]$ 
for some $t\in I_e-X_s$
and some $\tau$ of length $f(t)$. If such $e$ exists, enumerate the least such $t$ into $X$.

Clearly $X$ is a \ce set. For a contradiction, suppose that $\Phi_e^Y=X$ for some $Y$ and some $e$.
This means that every number in $I_e$ will be enumerated into $X$ at some stage.
Let $s_i, i<|I_e|$ be an increasing enumeration of the stages where these enumerations occurred.
Moreover let $t_i, i<|I_e|$ be the enumerations that occurred in stages $s_i, i<|I_e|$. By standard assumptions about
the functionals $(\Phi_j)$ and the construction, it follows that the sequence $(t_i)$ is increasing.
For each $i<|I_e|$, at stage $s_i$ some number $t_i\in I_e$ was enumerated into $X_{s_i}-X_{s_i-1}$ so
there exists a string $\tau_i$ of length $f(t_i)$ such that  $X_{s_i-1}\restr_{t_i+1}$ is a prefix of 
$\Phi_e^{\tau_i}[s_i-1]$
but $X_{s}\restr_{t_i+1}$ is not a prefix of 
$\Phi_e^{\tau_i}[s_i-1]$
for any $s\geq s_i$. Since $(t_i)$ is increasing and $f$ is non-decreasing, 
it follows that the sequence $(|\tau_i|)$ is non-decreasing and 
$\tau_i,  i<|I_e|$ is a \pf set of strings. Hence by Kraft's inequality we have
\[
\sum_{ i<|I_e|} 2^{-|\tau_i|}\leq 1.
\]
On the other hand by the choice of $I_e$ we have
\[
\sum_{i<|I_e|} 2^{-|\tau_i|}= \sum_{j\in I_e} 2^{-f(j)}>2
\]
which is a contradiction.
Hence we may conclude that there does not exist any real $Y$ such that $\Phi_e^Y=X$ which completes the proof
of $\neg$(a)$\to\neg$(c).

\subsection{Proof of Theorem \ref{Etk37fTeOu}}
Recall the statement of Theorem \ref{Etk37fTeOu}:
\[
\parbox{14cm}{Every \lce real $X$ can be computed from $\Omega$ with oracle-use
$g(n)=\min_{i\geq n} K(X\restr_n)$.}
\]
Let $g,\Omega, X$ be as above and let
$(X_s), (\Omega_s)$ be computable nondecreasing 
dyadic rational approximations that converge to $X,\Omega$ respectively.
We construct a Solovay $J$ test as follows. 

At stage $s+1$, consider the set $I_{s}$ of $n\leq s$ such that
one of the following holds:
\begin{enumerate}[\hspace{0.5cm}(a)]
\item $X_{s+1}\restr_n= X_{s}\restr_n$ and $K(X\restr_n)[s+1]<K(X\restr_n)[s]$;
\item $X_{s+1}\restr_n\neq X_{s}\restr_n$ and $K(X\restr_n)[s+1]<\infty$.
\end{enumerate}
For each $n\in I_s$ enumerate $\Omega\restr_{K(X\restr_n)}[s+1]$ into $J$.

First we verify that $J$ is a Solovay test.
According to the construction, every string $\sigma$ enumerated in $J$ at stage $s+1$ corresponds
to clause (a) or clause (b). In either case, any such $\sigma$ has the length of a corresponding 
currently shortest 
description of $X_{s+1}\restr_n$ for some $n$. Hence any string $\sigma\in J$ is uniquely associated
with a string of the same length, in the domain of the underlying universal machine $U$. Since the weight of
$U$ is bounded by 1, the same will be true of the weight of $J$. Since $J$ is also a \ce set, it is a Solovay test.


It remains to show how to compute $X\restr_n$ from $\Omega$, by a Turing reduction that uses
only the first $g(n)$ bits of of $\Omega$. 
Since $\Omega$ is \ml random there must be a stage $t_0$ in the construction after which no initial segment of 
$\Omega$ is ever enumerated in $J$ after $t_0$. From now on we work in the stages $s>t_0$. 
Note that $g$ is a $\Delta^0_2$ function so it has a
computable approximation $(g_s)$.

The first thing to note is that, by construction, 
\begin{equation}\label{KS9FkftGfO}
\parbox{13cm}{the
settling time of the first $g(n)$ bits of  $\Omega_s$ is larger than the settling time of $g_s(n)$}
\end{equation}
assuming that
these settling times are larger than $t_0$, which is true for all but finitely many $n$.
This is because every time the approximation to $g(n)$ changes at some stage $s+1$, the initial segment
of $\Omega_s$ of length $g_{s}(n)$ is enumerated into $J$, and the fact that after stage $t_0$ no correct
approximation to $\Omega$ is ever enumerated into $J$.

Let $m_t$ be the settling time of $\Omega\restr_t$. In order to compute $X\restr_n$
we look for the first $t$ such that $m_t\geq g_{m_t}(n)$ and $m_t>t_0$. Then we decide that
$X_{m_t}\restr_n$ is the correct approximation to $X\restr_n$.

First note that by \eqref{KS9FkftGfO} we have $t\leq g(n)$, so this computation requires
at most the first $g(n)$ bits of $\Omega$. Second,  this computation is correct since
if $X_{m_t}\restr_n$ was not the correct approximation to $X\restr_n$, the construction would
enumerate $\Omega_{g_{m_t}(n)}\restr_{n}$ into $J$, which means that either
$m_t\geq g_{m_t}(n)$ is not true or $m_t$ is not
the settling time of $\Omega\restr_t$ (either of which is clearly a contradiction).

\subsection{Proof of Theorem \ref{qVH5QUn2qI}}\label{1kYywGW7s}
Given a
computable non-decreasing function
$f$ we need to show that the following are equivalent:\footnote{The assumption that
$f$ is non-decreasing is stronger than the assumption of the original statement of Theorem \ref{qVH5QUn2qI}
that $h(n)=f(n)+\log n$ is nondecreasing. However the reader can verify that the argument we give 
is insensitive to this discrepancy.}
\begin{enumerate}[\hspace{0.5cm}(a)]
\item the \pf complexity of every \ce set is bounded by $f+\bigo{1}$;
\item $\sum_n n\cdot 2^{-f(n)}$ is finite.
\end{enumerate}

Note that (b) implies that $f(n)-\log n$ is an upper bound for $K(n)$. On the other hand,
in order to describe the first $n$ bits of a \ce set we merely need the shortest \pf description of $n$
concatenated by a code of the number of elements in $A\restr_n$.
Since the latter has length at most $\log n$, we see that (b) implies (a) (given the first $K(n)$ bits
we can recover $n$, then calculated $\log n$ and read the next $\log n$, thus executing a self-delimiting code).
%

Next, we show that $\neg$(b) implies $\neg$(a).
Assuming that the sum in (b) is infinite we wish to construct a \ce set $A$ 
whose initial segment \pf complexity is not $f+\bigo{1}$. In the construction of $A$
we will use the following fact.
\begin{lem}\label{eefkq2MrHi}
If $\sum_n n\cdot 2^{-f(n)}$ is infinite then for every $c\in\Nat$ the sum
$\sum_{n>c} (n-c)\cdot 2^{-f(n)}$ is infinite.
\end{lem}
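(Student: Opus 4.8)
The plan is to reduce the claim to the observation that deleting finitely many terms from a divergent series of non-negative terms leaves a divergent series, combined with the elementary fact that the linear factor $n-c$ is comparable to $n$ once $n$ is large relative to $c$.

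First I would fix $c\in\Nat$ and restrict attention to the tail $n\geq 2c$. For such $n$ we have $n-c\geq n-n/2=n/2$, so that $(n-c)\cdot 2^{-f(n)}\geq \tfrac{1}{2}\,n\cdot 2^{-f(n)}$. Summing over $n\geq 2c$ and discarding the remaining non-negative terms gives
\[
\sum_{n>c} (n-c)\cdot 2^{-f(n)}\;\geq\;\sum_{n\geq 2c} (n-c)\cdot 2^{-f(n)}\;\geq\;\frac{1}{2}\sum_{n\geq 2c} n\cdot 2^{-f(n)}.
\]

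Second, I would observe that $\sum_{n\geq 2c} n\cdot 2^{-f(n)}$ differs from the full sum $\sum_n n\cdot 2^{-f(n)}$ only by the finitely many terms with $n<2c$, each of which is a finite real number; since the full sum is infinite by hypothesis, the tail $\sum_{n\geq 2c} n\cdot 2^{-f(n)}$ is infinite as well. Hence the right-hand side of the displayed inequality is infinite, and therefore so is $\sum_{n>c}(n-c)\cdot 2^{-f(n)}$, as required.

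There is essentially no obstacle here: the only point requiring any care is the choice of the threshold $2c$, which forces the factor $n-c$ to retain at least a constant fraction of $n$; any threshold of the form $\lceil(1+\epsilon)c\rceil$ with the corresponding constant in place of $\tfrac12$ would serve equally well. In particular no use is made of the monotonicity of $f$, nor of any assumption on whether $\sum_n 2^{-f(n)}$ itself converges or diverges.
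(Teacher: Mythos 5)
Your proof is correct. It differs in route from the paper's: the paper argues by contradiction, assuming $\sum_{n>c}(n-c)\cdot 2^{-f(n)}$ is finite and then using the pointwise decomposition $n\cdot 2^{-f(n)}=(n-c)\cdot 2^{-f(n)}+c\cdot 2^{-f(n)}$ (for $n>c$) to conclude that $\sum_{n>c}c\cdot 2^{-f(n)}$, and hence $\sum_{n>c}2^{-f(n)}$, must be infinite; since $n-c\geq 1$ on that range this forces $\sum_{n>c}(n-c)\cdot 2^{-f(n)}=\infty$ after all, a contradiction. You instead give a direct comparison: past the threshold $n\geq 2c$ the factor $n-c$ retains at least half of $n$, so the tail of the new series dominates half the tail of the divergent one. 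Both arguments are elementary and equally valid; yours avoids the contradiction and makes the quantitative relationship between the two series explicit, while the paper's version extracts the slightly stronger intermediate fact that $\sum_{n>c}2^{-f(n)}$ itself diverges whenever the weighted series does and the claim fails. Neither uses monotonicity of $f$, as you correctly note.
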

\begin{proof}
For a contradiction suppose that this does not hold for some $c$. Then 
\[
\sum_n n\cdot 2^{-f(n)} - 
\sum_{n>c} (n-c)\cdot 2^{-f(n)}=
\sum_{n\leq c} n\cdot 2^{-f(n)}+
\sum_{n>c} c\cdot 2^{-f(n)}
\] 
is infinite, which means that
\[
\sum_{n>c} 2^{-f(n)}=\infty
\hspace{1cm}
\textrm{and}\hspace{1cm}
\sum_{n>c} (n-c)\cdot 2^{-f(n)}=\infty
\]
which contradicts our assumption.
\end{proof}
We need to construct a \ce set $A$ which satisfies the following requirements:
\[
\mathcal{R}_e:\ \ \exists n\ \ \ K(A\restr_n)>f(n)+e.
\]

By Lemma \ref{eefkq2MrHi} there exists a computable increasing sequence $(n_i)$ such that
\begin{equation}\label{rWfbYOZNJz}
\sum_{s\in (n_e, n_{e+1}]} (s-n_e)\cdot 2^{-f(s)}>2^e
\hspace{1cm}
\textrm{for all $e$.}
\end{equation}

Let us say that $\mathcal{R}_e$ {\em requires attention} at stage $s+1$ if
$K(A\restr_n)[s+1]\leq f(n)+e$ for all $n\leq n_{e+1}$ and $[n_e, n_{e+1})-A_s\neq\emptyset$.
Note that the numbers we are going to enumerate into $A$ for the satisfaction of $\mathcal{R}_e$
are from the interval $[n_e, n_{e+1})$ while in \eqref{rWfbYOZNJz} we use interval 
$(n_e, n_{e+1}]$. This is because for each $e$, a change in the approximation to $A\restr_{n_{e+1}}$ can only
be achieved with the enumeration into $A$ of one of the numbers in $[0, n_{e+1})$.

We can now define a computable enumeration $(A_s)$ of $A$. At stage $s+1$
consider the least $e\leq s$ such that $\mathcal{R}_e$ requires attention. 
If such $e$ exists, enumerate the  least number of $[n_e, n_{e+1})-A_s$ into $A$.
This completes the construction of $A$.

We verify that $A$ meets requirement $\mathcal{R}_e$ for all $e$. 
Since for each $e$ the interval $[n_e, n_{e+1})$ contains finitely many numbers, by definition 
each requirement can only require attention at finitely many stages.
This is because when $e$ is the least number such that  $\mathcal{R}_e$
requires attention at stage $s+1$, then a number from $[n_e, n_{e+1})-A_s$ enters $A$.
If  $\mathcal{R}_e$ was not satisfied for some $e$, then it follows that
$[n_e, n_{e+1})\subseteq A$. But at each stage $s+1$ where a number 
$n\in [n_e, n_{e+1})$ enters $A$ we have
$K(A\restr_{t})[s]\leq f(t)+e$ for all $t\in (n_e, n_{e+1}]$.
So if such an enumeration of 
$n\in [n_e, n_{e+1})$ occurs at some stage stage $s+1$, 
we can count an additional description 
of $A_s\restr_t$ of length at most $f(t)+e$ for each $t\in (n, n_{e+1}]$ in the domain of
in the underlying universal \pf machine. Indeed, this is because if $n\in [n_e, n_{e+1})$ was enumerated into $A$ at
some stage $s+1$,
the previous such enumeration was of the number $n+1$ at some stage $t+1<s+1$ and
for each $m\in (n+2, n_{e+1}]$ we have $A_s\restr_m\neq A_t\restr_m$. Of course if $n\in [n_e, n_{e+1})$
was the first number in this interval to be enumerated into $A$, then $n=n_{e+1}-1$ and we count
the description $A_s\restr_{n_{e+1}-1}$ of length at most $f(n_{e+1})+e$ for the first time. 
This means that by the stage where
all of the numbers in $[n_e, n_{e+1})$ are enumerated into $A$,
we can count descriptions in the domain of the universal machine of weight at least 
\[
\sum_{s\in (n_e, n_{e+1}]} (s-n_e)\cdot 2^{-f(s)-e}.
\]
This contradicts \eqref{rWfbYOZNJz} and the fact that the weight of the domain of a \pf machine is bounded
above by 1.
Hence we can conclude that the \ce set $A$ that we enumerated satisfies
$\mathcal{R}_e$  for each $e$.

\subsection{Proof of Theorem \ref{cJbypbgUJ}}\label{miPvnCjZSZ}
We need to show that:
\begin{enumerate}[\hspace{0.5cm}(a)]
\item if $\mathbf{a}$ is array non-computable then it contains a \ce set whose \pf complexity is not bounded
above by $2\log n+\log\log n$;
\item if $\mathbf{a}$ is array computable then the \pf complexity of all of its \ce members is bounded above by
$\log n+2\log\log n+\bigo{1}$.
\end{enumerate}

For (a), note that $\sum_i 2^{-\log {i}-\log\log i}=\infty$, so we only need to
adapt the construction for the proof of Theorem \ref{qVH5QUn2qI} in Section \ref{1kYywGW7s},
inside any \ce array non-computable \ce degree. 
In the following we fix $f(n)=\log n$  in  Theorem \ref{qVH5QUn2qI}
and the construction of Section \ref{1kYywGW7s}.
Note that it suffices to only satisfy infinitely many requirements
$\mathcal{R}_e$ from Section \ref{1kYywGW7s}.
Recall from \cite{DJS2} that if $(I_n)$ is a computable sequence of intervals such that 
$I_n<I_m$ whenever $n<m$ then 
\begin{equation}\label{A6SpbUfMn1}
\parbox{12cm}{every \ce array non-computable degree
$\mathbf{a}$ contains a \ce set $A$ such that for each \ce set $W$ we have 
$A\cap I_n=W\cap I_n$ for infinitely many $n$.}
\end{equation}
Without loss of generality we may assume that $n_{e+1}-n_e>n_e-n_{e-1}$ for all $e>0$
in the construction of Section \ref{1kYywGW7s}. Then we can set $I_e=[n_e, n_{e+1})$
and apply \eqref{A6SpbUfMn1} in order to fix a \ce set $A$ in $\mathbf{a}$ with the stated property.
Next, we construct a \ce set $W$ in stages as follows, mimicking the construction
of Section \ref{1kYywGW7s}.
At stage $s+1$, let $e\leq s$ be the least number such that 
$K(A\restr_n)[s+1]\leq f(n)+e$ for all $n\leq n_{e+1}$ 
and $I_e\not\subseteq A_s$, and if such $e$ exists 
put the smallest element of $I_e-A_s$ into $W$.

By  \eqref{A6SpbUfMn1}  we have $A\cap I_e=W\cap I_e$ for infinitely many $n$.
Given $e$ such that $A\cap I_e=W\cap I_e$, the assumption that $\mathcal{R}_e$ is not satisfied
leads to a contradiction by the same argument that we used in
Section \ref{1kYywGW7s}. Hence we can conclude that there are infinitely many $e$ such that
$A$ satisfies $\mathcal{R}_e$, which means that the \pf complexity of $A$ is not bounded above by
$\log n + f(n)+\bigo{1}$ which is $2\log n+\bigo{1}$ by our choice of $f$.

For (b), following Kummer's argument from \cite{DBLP:journals/siamcomp/Kummer96} and replacing
plain for \pf complexity, we get that
if $A$ is a \ce set of array computable degree,
$f(n)$ is a computable upper bound of $K(n)$ and
$g$ is a computable unbounded non-decreasing function,
then $K(A\restr_n)$ is bounded above by $f(n)+g(n)+\bigo{1}$.
Note that $\sum_i 2^{-\log {i}-1.5\cdot\log\log i}<\infty$ so if we take $g(n)=0.5\cdot \log\log n$,
the above fact shows that the \pf complexity of $A$ is bounded above by $\log n + 2\log\log n$.

\end{document}